\def\borderColor{blue!60}
\tikzstyle{internal} = [draw, fill, shape=circle]
\tikzstyle{external} = [shape=circle]
\tikzstyle{square}   = [draw, fill, rectangle]
\tikzstyle{triangle} = [draw, fill, regular polygon, regular polygon sides=3, inner sep=3pt]
\tikzstyle{pentagon} = [draw, fill, regular polygon, regular polygon sides=5, inner sep=2pt, minimum size=14pt]
\tikzset{every fit/.append style=text badly centered}
\tikzset{>=latex} 
\newcommand{\numP}{\#{\bf P}}
\newcommand{\NP}{{\bf NP}}
\newcommand{\prs}{partial rejection sampling} 
\newcommand{\Ex}{\mathop{\mathbb{{}E}}\nolimits}
\def\*#1{\mathbf{#1}}
\def\+#1{\mathcal{#1}}
\def\-#1{\mathrm{#1}}
\newcommand{\abs}[1]{\left\vert#1\right\vert}
\newcommand{\ceil}[1]{\lceil#1\rceil}
\newcommand{\eps}{\varepsilon}
\renewcommand{\Pr}{\mathop{\mathrm{Pr}}\nolimits}
\newcommand{\Zrel}{Z_{rel}}
\newcommand{\Zreach}{Z_{reach}}
\newcommand{\outedge}{\ensuremath{A_{\mathrm{out}}}}
\newcommand{\reverse}[1]{\ensuremath{\overline{#1}}}
\newcommand{\sfix}{\ensuremath{S_{\fix}}}
\newcommand{\reach}{\textsc{Reachability}}
\newcommand{\relia}{\textsc{Reliability}}
\newcommand{\bireach}{\textsc{Bi-directed Reachability}}
\newcommand{\wt}{\mathrm{wt}}
\newcommand{\fix}{\mathrm{fix}}
\newcommand{\flip}{\mathrm{flip}}
\newcommand{\maxx}{\mathrm{max}}
\newtheorem{theorem}{Theorem}
\newtheorem{lemma}[theorem]{Lemma}
\newtheorem{claim}[theorem]{Claim}
\newtheorem{proposition}[theorem]{Proposition}
\newtheorem{definition}[theorem]{Definition}
\crefname{theorem}{Theorem}{Theorems}
\crefname{observation}{Observation}{Observations}
\crefname{claim}{Claim}{Claims}
\crefname{condition}{Condition}{Conditions}
\crefname{algorithm}{Algorithm}{Algorithms}
\crefname{property}{Property}{Properties}
\crefname{example}{Example}{Examples}
\crefname{fact}{Fact}{Facts}
\crefname{lemma}{Lemma}{Lemmas}
\crefname{corollary}{Corollary}{Corollaries}
\crefname{definition}{Definition}{Definitions}
\crefname{remark}{Remark}{Remarks}
\crefname{proposition}{Proposition}{Propositions}
\crefname{equation}{equation}{equations}
\def\prob#1#2#3{\goodbreak\begin{list}{}{\labelwidth\z@ \itemindent-\leftmargin
                        \itemsep\z@  \topsep6\p@\@plus6\p@
                        \let\makelabel\descriptionlabel}
                      \item[\textbf{Name}]#1
                      \item[\textbf{Instance}]#2
                      \item[\textbf{Output}]#3
                \end{list}}
\title[Approximating network reliability]{A polynomial-time approximation algorithm for all-terminal network reliability}\thanks{The work described here was supported by the EPSRC research grant
EP/N004221/1 ``Algorithms that Count''.}
\author[H.\ Guo]{Heng Guo}
\address[Heng Guo]{School of Informatics, University of Edinburgh, Informatics Forum, Edinburgh, EH8 9AB, United Kingdom.}
\email{hguo@inf.ed.ac.uk}
\author[M.\ Jerrum]{Mark Jerrum}
\address[Mark Jerrum]{School of Mathematical Sciences,
Queen Mary, University of London, Mile End Road, London, E1 4NS, United Kingdom.}
\email{m.jerrum@qmul.ac.uk}
\begin{document}

\begin{abstract}
  We give a fully polynomial-time randomized approximation scheme (FPRAS) for the all-terminal network reliability problem,
  which is to determine the probability that, in a undirected graph, assuming each edge fails independently, the remaining graph is still connected.
  Our main contribution is to confirm a conjecture by Gorodezky and Pak ({\it Random Struct.\ Algorithms}, 2014),
  that the expected running time of the ``cluster-popping'' algorithm in bi-directed graphs is bounded by a polynomial in the size of the input.
\end{abstract}

\maketitle

\section{Introduction}

Network reliability problems are extensively studied \numP-hard problems \cite{Col87} (see also \cite{BP83, PB83, KL85, Ball86}).
In fact, these problems are amongst the first of those shown to be \numP-hard, and the two-terminal version is listed in Valiant's original thirteen \cite{Val79}.
The general setup is that in a given (undirected or directed) graph, every edge (or arc) $e$ has an independent probability~$p_e$ to fail,
and we are interested in various kinds of connectivity notions of the remaining graph.
For example, the two-terminal connectedness \cite{Val79} asks for the probability that for two vertices $s$ and $t$,
$s$ is connected to~$t$ in the remaining graph,
and the (undirected) all-terminal network reliability asks for the probability of all vertices being connected after edges fail.
The latter can also be viewed as a specialization of the Tutte polynomial $T_{G}(x,y)$ with $x=1$ and $y>1$,
yet another classic topic whose computational complexity is extensively studied \cite{JVW90, VW92, GJ08, GJ14}.

Prior to our work, the approximation complexity of network reliability problems remained elusive despite their importance.
There is no known efficient approximation algorithm (for any variant), but nor is there any evidence that such an algorithm does not exist.
A notable exception is Karger's fully polynomial-time randomized approximation scheme (FPRAS) for (undirected) all-terminal network \emph{unreliability} \cite{Kar99} 
(see also \cite{HS14, Kar16, Kar17} for more recent developments).
Although approximating unreliability is potentially more useful in practice,
it does not entail an approximation of its complement.

In this paper, we give an FPRAS for the all-terminal network reliability problem, defined below and denoted \relia.
\prob{\relia}{A (undirected) graph $G=(V,E)$, and failure probabilities $\mathbf{p}=(p_e)_{e\in E}$.}{$\Zrel(G;\mathbf{p})$, which is the probability that if each edge $e$ fails with probability $p_e$, the remaining graph is connected.}
When $p_e$ is independent of~$e$, \relia{} is an evaluation of the Tutte polynomial.  The \emph{Tutte polynomial}
is a two-variable polynomial $T_{G}(x,y)$ associated with a graph~$G$, which encodes much interesting information about~$G$.
As $(x,y)$ ranges over $\mathbb{R}^2$ or $\mathbb{C}^2$ we obtain a family of graph parameters, the so-called \emph{Tutte plane}.
As already noted, the study of the computational complexity of these parameters has a long history.
\relia{} with a uniform failure probability $0<p<1$ is equivalent to evaluating the Tutte polynomial $T_{G}(x,y)$ on the line $x=1$ and $y=\frac{1}{p}>1$.
Our algorithm is the first positive result on the complexity of the Tutte plane since Jerrum and Sinclair 
presented an FPRAS for the partition function of the ferromagnetic Ising model,
which is equivalent to the Tutte polynomial on the positive branch of the hyperbola $(x-1)(y-1)=2$ \cite{JS93}.  
It also answers a well-known open problem from 1980s, 
when the \numP-hardness of \relia\ was established \cite{Jer81,PB83} and the study of approximate counting initiated.
This problem is explicitly proposed in, for example, \cite[Conjecture 8.7.11]{Wel93} and \cite{Kar99}.
We note that many conjectures by Welsh (\cite[Chapter 8.7]{Wel93} and \cite{Wel99}) remain open,
and we hope that our work is only a beginning to answering these questions.

Another related and important reliability measure is \emph{reachability}, introduced and studied by Ball and Provan \cite{BP83}.
A directed graph $G=(V,A)$ with a distinguished root $r$ is said to be \emph{root-connected} if all vertices can reach $r$.
Reachability, denoted $\Zreach(G,r;\mathbf{p})$ for failure probabilities $\mathbf{p}=(p_e)_{e\in A}$, is the probability that, 
if each arc $e$ fails with probability $p_e$ independently, the remaining graph is still root-connected.  

We define the computational problem formally.
\prob{\reach}{A directed graph $G=(V,A)$ with root $r$, and failure probabilities $\mathbf{p}=(p_e)_{e\in A}$.}{$\Zreach(G,r;\mathbf{p})$.}
Exact polynomial-time algorithms are known when the graph is acyclic \cite{BP83} or has a small number of cycles \cite{Hag91}.
However, in general the problem is \numP-hard \cite{PB83}.

Ball \cite{Ball80} showed that \relia\ is equivalent to \reach\ in bi-directed graphs.
A bi-directed\footnote{There are other definitions of ``bi-directed graphs'' in the literature. Our definition is sometimes also called a symmetric directed graph.} graph is one 
where every arc has an anti-parallel twin with the same failure probability.
It is shown \cite{Ball80} that $Z_{rel}(G;\mathbf{p})=\Zreach(\overrightarrow{G},r;\mathbf{p}')$,
where $\overrightarrow{G}$ and $\mathbf{p}'$ are obtained by replacing every undirected edge in $G$ with a pair of anti-parallel arcs having the same failure probability in either direction,
and $r$ is chosen arbitrarily.  See \cref{lem:coupling}.

Our FPRAS for \relia\ utilizes this equivalence via approximating \reach\ in bi-directed graphs.
The core ingredient is the ``cluster-popping'' algorithm introduced by Gorodezky and Pak \cite{GP14}.
The goal is to sample root-connected subgraphs with probability proportional to their weights,
and then the reduction from counting to sampling is via a sequence of contractions.
A cluster is a subset of vertices not including the root and without any out-going arc.
The sampling algorithm randomizes all arcs independently, 
and then repeatedly resamples arcs going out from minimal clusters until no cluster is left,
at which point the remaining subgraph is guaranteed to be root-connected.
This approach is similar to Wilson's ``cycle-popping'' algorithm \cite{Wilson96} for rooted spanning trees,
and to the ``sink-popping'' algorithm \cite{CPP02} for sink-free orientations.
Gorodezky and Pak \cite{GP14} have noted that cluster-popping can take exponential time in general,
but they conjectured that in bi-directed graphs, 
the algorithm runs within polynomial-time.

We confirm this conjecture.
Let $p_{\maxx}$ be the maximum failure probability of edges (or arcs).
Let $m$ be the number of edges (or arcs) and $n$ the number of vertices.

\begin{theorem}  \label{thm:main}
  There is an FPRAS for \relia\ (or equivalently, \reach\ in bi-directed graphs).
  The expected running time is $O\left(\eps^{-2}(1-p_{\maxx})^{-3}m^2n^3\right)$ for an $(1\pm\eps)$-approximation.
  There is also an exact sampler to draw (edge-weighted) connected subgraphs with expected running time 
  $O\big((1-p_{\maxx})^{-1}m^2n\big)$.
\end{theorem}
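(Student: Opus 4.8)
The plan is to analyse the cluster-popping algorithm of Gorodezky and Pak through the theory of \prs{}, and then to upgrade an exact sampler to an FPRAS by self-reducibility. By the equivalence $\Zrel(G;\pvector)=\Zreach(\overrightarrow G,r;\pvector')$ of Ball \cite{Ball80} it suffices to handle \reach{} on a bi-directed graph $\overrightarrow G$, which we may assume is connected (otherwise $\Zrel=0$), so that $\Zreach>0$. Recall that cluster-popping maintains a random subgraph $\Theta$ in which each arc $e$ is included independently with probability $1-p_e$; while $\Theta$ contains a cluster it locates the minimal clusters (which are pairwise vertex-disjoint, hence have disjoint sets of out-going arcs) and independently resamples every arc leaving one of them; it halts when no cluster remains, at which point $\Theta$ is root-connected. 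The first task is to place this in the \prs{} framework, with bad events ``$C$ is a cluster'' and resampled variables the arcs out of $C$, and thereby to record that the algorithm terminates almost surely and returns a root-connected $\Theta$ with probability exactly proportional to $\wt(\Theta)=\prod_{e\in\Theta}(1-p_e)\prod_{e\notin\Theta}p_e$. Because cluster-popping resamples only the \emph{minimal} clusters rather than all violated bad events, a little care is needed here, but the correctness is established in \cite{GP14}; it yields at once the promised exact sampler (for an edge-weighted connected subgraph of $G$, via the same equivalence), so it remains only to bound the expected number of resampling operations.

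Second, I would reduce the running time to a ratio of weighted subgraph counts. In the spirit of the analyses of Wilson's cycle-popping \cite{Wilson96} and of sink-popping \cite{CPP02}, \prs{} theory controls the expected number of resampling operations by $Z_{\bad}/\Zreach$, where $\Zreach=\sum_{\Theta\text{ root-connected}}\wt(\Theta)$ and $Z_{\bad}$ is the total $\wt$-weight of subgraphs with \emph{exactly one} cluster. The elementary point I would use is that a subgraph has exactly one cluster precisely when the set $S$ of vertices failing to reach $r$ is non-empty and the subgraph it induces is strongly connected; so the unique cluster is itself strongly connected, which is what makes it cheap to repair.

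Third --- the heart of the argument, and where I expect essentially all of the difficulty, since this is precisely the Gorodezky--Pak conjecture --- I would bound $Z_{\bad}\le(1-p_{\maxx})^{-1}m\cdot\Zreach$ by constructing a weight-controlled, boundedly-many-to-one map $\phi$ from single-cluster subgraphs to root-connected subgraphs. Given $\Theta$ with unique strongly connected cluster $S$, connectivity of $G$ provides an edge between $S$ and $V\setminus S$, and bi-directedness of $\overrightarrow G$ turns it into an arc $e$ with tail in $S$ and head outside $S$; adding $e$ lets its tail --- and hence, by strong connectivity of $S$, every vertex of $S$ --- reach $r$, so $\phi(\Theta):=\Theta\cup\{e\}$ is root-connected. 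This changes the weight only by the single factor $(1-p_e)/p_e\ge 1-p_{\maxx}$, so $\wt(\Theta)\le(1-p_{\maxx})^{-1}\wt(\phi(\Theta))$; and $\phi$ is at most $m$-to-one, because from $\phi(\Theta)$ and the identity of the added arc one recovers $\Theta$, with at most $m$ candidates for that arc (once a canonical choice of crossing edge is fixed). Summing over images gives the claimed bound. Bi-directedness is doing the essential work throughout: it is what supplies the escape arc, and in the finer bookkeeping it lets arcs be redirected without changing any weight. Combining this bound with the polynomial (in $m,n$) cost of each resampling round --- essentially a strongly-connected-components computation together with the resampling of the $O(m)$ boundary arcs of the at most $n$ current minimal clusters --- gives the expected running time $O\big((1-p_{\maxx})^{-1}m^2n\big)$ of the exact sampler.

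Finally, I would obtain the FPRAS by the standard reduction from approximate counting to sampling. Fix a sequence of $n-1$ edge contractions that reduces $\overrightarrow G$ to a single vertex, deleting self-loops as they arise (which changes nothing), and write $\overrightarrow G=G_0,G_1,\dots,G_{n-1}$ with $G_i=G_{i-1}/e_i$. Then $\Zreach(\overrightarrow G)=\prod_{i=1}^{n-1}(1-p_{e_i})/\mu_i$, where $\mu_i=\Pr_{G_{i-1}}[\,e_i\text{ present}\mid\text{root-connected}\,]$ is the $i$-th conditional marginal. A short monotonicity argument --- root-connectedness is an increasing event, so conditioning on $e_i$ being present cannot decrease its probability --- shows $\mu_i\ge 1-p_{e_i}\ge 1-p_{\maxx}$, so each factor lies in $[1-p_{\maxx},1]$. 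Estimating each $\mu_i$ by the empirical frequency of $e_i$ among samples drawn by cluster-popping on $G_{i-1}$, and taking enough samples that the product has relative error $O(\eps)$ with constant probability --- for which, bounding the relative variance of a single-sample estimator by $O\big((1-p_{\maxx})^{-1}\big)$ and summing over the $n-1$ stages, $O\big(\eps^{-2}(1-p_{\maxx})^{-1}n\big)$ samples per stage suffice --- and multiplying by the per-sample cost $O\big((1-p_{\maxx})^{-1}m^2n\big)$, gives the stated running time $O\big(\eps^{-2}(1-p_{\maxx})^{-3}m^2n^3\big)$, where the last $(1-p_{\maxx})^{-1}$ factor beyond the crudest count is recovered by a more careful variance bound.
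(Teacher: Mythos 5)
Your overall architecture (a partial-rejection-sampling analysis of cluster-popping, an injective ``repair'' map to bound the resampling count, then self-reducibility for counting) matches the paper, but there is a genuine gap at exactly the point you flag as the heart of the argument. The quantity controlling the expected number of resamplings in the extremal framework is $Z_1/Z_0$, where $Z_1$ is the total weight of subgraphs with exactly one occurring \emph{bad event}, and the bad events must be taken to be \emph{minimal} clusters: these are strongly connected and hence pairwise disjoint, which is what makes the instance extremal, whereas two non-minimal clusters can overlap without the corresponding events being independent, so ``$C$ is a cluster'' is not a legitimate choice. A subgraph with exactly one minimal cluster is far more general than a subgraph with exactly one cluster: writing $U$ for the set of vertices that cannot reach $r$, the condition is only that the DAG of strongly connected components of $(U,S[U])$ has a unique sink, not that $U$ is strongly connected. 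Your repair --- add a single arc $e$ from $U$ to its complement --- yields a root-connected graph only in the latter, special case; in general the vertices of $U$ that cannot reach the tail of $e$ inside $U$ still cannot reach $r$, so $\Theta\cup\{e\}\notin\Omega_0$. The paper's proof of \cref{lem:ratio} handles the general case by, in addition, \emph{reversing} all inter-component arcs inside the set $W$ of vertices reachable from the tail's component, turning that component into the unique sink of its part of the DAG; bi-directedness is what makes these reversals weight-preserving, and inverting the map then requires recording an extra vertex to relocate the original minimal cluster, which is why the correct bound is $Z_1\le\frac{p_{\maxx}}{1-p_{\maxx}}mn\,Z_0$ with a factor $mn$ rather than your $m$. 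As written, your argument bounds only the contribution of the strongly-connected-$U$ subgraphs and therefore does not bound $\Ex T$.

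A secondary gap: the exact sampler for connected subgraphs of the \emph{undirected} graph does not follow ``at once'' from Ball's equivalence, which is an identity between partition functions, not a coupling of samples; forgetting orientations of a $\pi_{\overrightarrow{G}}$-distributed root-connected subgraph does not produce the $\pi_G$-distribution. The paper devotes \cref{sec:equivalence} to constructing an explicit map $\Psi$, via a canonical exploration, whose pushforward of $\pi_{\overrightarrow{G}}$ is $\pi_G$ (\cref{lem:sampling}). Your final counting step via conditional marginals $\mu_i$ and positive correlation is a sound alternative to the telescoping-ratio estimator of Section~4 and would give comparable bounds, but it rests on the sampler whose analysis is the missing piece above.
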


We analyze the cluster-popping algorithm \cite{GP14} under the \emph{\prs} framework \cite{GJL17},
which is a general approach to sampling from a product distribution conditioned on avoiding a number of ``bad'' events. 
Partial rejection sampling is inspired by the Moser-Tardos algorithm for the Lov\'asz Local Lemma \cite{MT10}.
It starts with randomizing all variables independently, and then gradually eliminating ``bad'' events.
At every step, we need to find an appropriate set of variables to resample.
We call an instance \emph{extremal} \cite{KS11,Shearer85}, if any two bad events are either disjoint or independent.
For extremal instances, the resampling set can be simply chosen to be the set of all variables involved in occurring bad events \cite{GJL17},
and the algorithm becomes exactly the same as the Moser-Tardos resampling algorithm \cite{MT10}.
In particular, all three ``popping'' algorithms \cite{Wilson96, CPP02, GP14} are special cases of \prs\ for extremal instances.
In case of cluster-popping, the bad events are exactly minimal clusters.

The advantage of the \prs\ treatment is that we have an explicit formula for the expected number of resampling events for any extremal instance \cite{KS11,GJL17},
which equals to the ratio between the probability of having exactly one bad event and the probability of avoiding all bad events.
In order to bound this ratio, we use a combinatorial encoding idea and design a mapping from subgraphs with a unique minimal cluster to root-connected subgraphs.
To make this mapping injective, we record an extra vertex and an arc so that we can recover the pre-image.
This extra cost is upper-bounded by a polynomial in the size of the graph.


Cluster-popping only draws root-connected subgraphs in the bi-directed setting.
In order to sample connected subgraphs in the undirected setting,
we provide an alternative proof of the equivalence between \relia\ and \reach\ in bi-directed graphs,
which essentially is a coupling argument.
This coupling has a new consequence that, once we have a sample of a root-connected subgraph,
it is easy to generate a connected subgraph according to the correct distribution.

In \Cref{sec:prelim} we introduce the cluster-popping algorithm and the partial rejection framework.
In \Cref{sec:run-time} we analyze its running time in bi-directed graphs.
For completeness, in \Cref{sec:counting} we include the approximate counting algorithm due to Gorodezky and Pak \cite{GP14}.
In \Cref{sec:equivalence} we give a coupling proof of the equivalence between \relia\ and \reach\ in bi-directed graphs.
In \Cref{sec:fixedsize} we use our sampling algorithm to show how to approximately count the number of connected subgraphs of a fixed size.
In \Cref{sec:conclusion} we conclude by mentioning a few open problems.

\section{Cluster-popping}\label{sec:prelim}

Let $G=(V,A)$ be a directed\footnote{It is easy to see that in a undirected graph, 
reachability is the same as all-terminal reliability.} graph with root $r$.
The graph $G$ is called \emph{root-connected} if there is a directed path in $G$ from every non-root vertex to~$r$.
Let $0<p_e<1$ be the failure probability of arc $e$,
and define the weight of a subgraph $S$ to be $\wt(S):=\prod_{e\in S}(1-p_e)\prod_{e\not\in S} p_{e}$.
Then reachability, $\Zreach(G,r;\mathbf{p})$, is defined as follows,
\begin{align*}
  \Zreach(G,r;\mathbf{p}) := \sum_{\substack{S\subseteq A\\\text{$(V,S)$ is root-connected}}} \wt(S).
\end{align*}
Here, $\mathbf{p}=(p_e:e\in A)$ denotes the vector of failure probabilities.

Let $\pi_G(\cdot)$ (or $\pi(\cdot)$ for short) be the distribution resulting 
from choosing each arc~$e$ independently with probability $1-p_e$, 
and conditioning on the resulting graph being root-connected.
In other words, the support of $\pi(\cdot)$ is the collection of all root-connected subgraphs, 
and the probability of each subgraph $S$ is proportional to its weight $\wt(S)$.
Then $\Zreach(G,r;\mathbf{p})$ is the normalizing factor of the distribution $\pi(\cdot)$.
Gorodezky and Pak \cite{GP14} have shown that approximating $\Zreach(G,r;\mathbf{p})$ can be reduced to sampling from $\pi(\cdot)$ when the graph is bi-directed.

The cluster-popping algorithm of Gorodezky and Pak \cite{GP14}, to sample root-connected subgraphs from $\pi(\cdot)$, can be viewed as a special case of \prs\ \cite{GJL17} for extremal instances.
With every arc~$e$ of $G$ we associate a random variable that records whether that
arc has failed.  Bad events are characterized by the following notion of clusters.

\begin{definition}\label{def:cluster}
  In a directed graph $(V,A)$ with root $r$, a subset $C\subseteq V$ of vertices is called a \emph{cluster} 
  if $r\not\in C$ and there is no arc $u\rightarrow v \in A$ such that $u\in C$ and $v\not\in C$.

  We say $C$ is a \emph{minimal cluster} if $C$ is a cluster and for any proper subset $C'\subset C$, $C'$ is not a cluster.
\end{definition}

If $(V,A)$ contains no cluster, then it is root-connected.
For each vertex $v$, let $\outedge(v)$ be the set of outgoing arcs from $v$.
We also abuse the notation to write $\outedge(S) = \bigcup_{v\in S}\outedge(v)$ 
for a subset $S\subset V$ of vertices.
Notice that $\outedge(S)$ contains edges between vertices inside $S$.
To ``pop'' a cluster~$C$, we re-randomize all arcs in $\outedge(C)$.
However, re-randomizing clusters does not yield the desired distribution.
We will instead re-randomize minimal clusters.

\begin{claim}\label{clm:cluster:sc}
  Any minimal cluster is strongly connected.
\end{claim}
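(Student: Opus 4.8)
The plan is to prove the contrapositive-style statement by contradiction with minimality, exploiting the defining feature of a cluster: since a cluster has no arcs leaving it, any directed path that starts inside a cluster stays inside it forever.

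First I would fix an arbitrary vertex $v\in C$ and define $C'\subseteq V$ to be the set of all vertices reachable from $v$ by a directed path in $G$. Trivially $v\in C'$, so $C'\neq\emptyset$. The key observation is that $C'\subseteq C$: because $C$ has no outgoing arc, a straightforward induction on the length of a directed path shows that any path starting at $v\in C$ never leaves $C$. In particular $r\notin C'$, since $r\notin C$.

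Next I would verify that $C'$ is itself a cluster. The condition $r\notin C'$ has just been established. For the other condition, suppose $u\in C'$ and $u\to w\in A$; appending this arc to a directed path from $v$ to $u$ yields a directed path from $v$ to $w$, hence $w\in C'$. So $C'$ has no arc leaving it, and $C'$ is a cluster contained in $C$.

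Finally, minimality of $C$ forces $C'=C$, because a minimal cluster cannot properly contain another cluster. Therefore every vertex of $C$ is reachable from $v$, and since all such paths remain inside $C$ they use only arcs with both endpoints in $C$. As $v\in C$ was arbitrary, the subgraph induced on $C$ is strongly connected. The only real content here is identifying the correct subset to test against minimality — the forward-reachable set from a single vertex — after which every step is immediate; I do not anticipate a genuine obstacle, which is presumably why this is stated as a claim rather than a lemma.
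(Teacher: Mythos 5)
Your proof is correct and follows essentially the same route as the paper: fix an arbitrary $v\in C$, observe that its forward-reachable set is a cluster contained in $C$ (since no arc leaves $C$), and invoke minimality to conclude that $v$ reaches all of $C$. The paper phrases this as a contradiction with a proper sub-cluster, but the argument is identical.
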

\begin{proof}
  Let $C$ be a minimal cluster, and $v\in C$ be an arbitrary vertex in $C$.
  We claim that $v$ can reach all vertices of $C$.
  If not, let $C'$ be the set of reachable vertices of $v$ and $C'\subsetneq C$.
  Since $C'$ does not have any outgoing arcs, $C'$ is a cluster.
  This contradicts to the minimality of $C$.
\end{proof}

\begin{claim}\label{clm:cluster:extremal}
  If $C_1$ and $C_2$ are two distinct minimal clusters,
  then $C_1\cap C_2 = \emptyset$.
\end{claim}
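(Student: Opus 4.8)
The plan is to show that the intersection of two clusters is again a cluster (provided it is nonempty), and then invoke minimality to force the two minimal clusters to coincide.

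First I would verify the closure property: if $C_1$ and $C_2$ are clusters and $C_1\cap C_2\neq\emptyset$, then $C_1\cap C_2$ is a cluster. Indeed, $r\notin C_1\cap C_2$ since $r\notin C_1$. And if there were an arc $u\rightarrow v\in A$ with $u\in C_1\cap C_2$ and $v\notin C_1\cap C_2$, then $v$ would miss at least one of $C_1,C_2$, say $v\notin C_1$; but then $u\in C_1$ and $v\notin C_1$ would witness an outgoing arc of $C_1$, contradicting that $C_1$ is a cluster. Hence $C_1\cap C_2$ has no outgoing arc and is a cluster.

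Next, suppose for contradiction that $C_1$ and $C_2$ are distinct minimal clusters with $C_1\cap C_2\neq\emptyset$. By the closure property just established, $C_1\cap C_2$ is a cluster, and it is a subset of $C_1$. Minimality of $C_1$ says no proper subset of $C_1$ is a cluster, so $C_1\cap C_2$ cannot be a proper subset of $C_1$; therefore $C_1\cap C_2=C_1$, i.e.\ $C_1\subseteq C_2$. Running the same argument with the roles of $C_1$ and $C_2$ swapped gives $C_2\subseteq C_1$, hence $C_1=C_2$, contradicting distinctness. Therefore $C_1\cap C_2=\emptyset$.

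This argument is elementary and I do not anticipate a genuine obstacle; the only point that needs a moment's care is the direction of the containment coming from minimality (a proper \emph{subset} of $C_1$ cannot be a cluster, which is what pins $C_1\cap C_2$ down to all of $C_1$). Note also that strong connectivity of minimal clusters (\Cref{clm:cluster:sc}) is not needed here, though one could alternatively phrase the conclusion via that route.
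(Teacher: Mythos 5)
Your proof is correct. It takes a genuinely different route from the paper's: the paper first establishes (\Cref{clm:cluster:sc}) that every minimal cluster is strongly connected, then observes that a minimal cluster is in fact a strongly connected \emph{component} (maximality follows since a cluster has no outgoing arcs), and concludes because distinct strongly connected components are disjoint. You instead prove a closure property --- the intersection of two clusters is a cluster whenever it is nonempty --- and then squeeze with minimality on both sides. Your argument is more elementary and self-contained: it uses only \Cref{def:cluster}, does not rely on \Cref{clm:cluster:sc}, and in passing establishes the slightly more general fact that clusters are closed under nonempty intersection (so minimal clusters automatically form an antichain of pairwise disjoint sets). What the paper's route buys is brevity and reuse: \Cref{clm:cluster:sc} is needed later anyway (e.g.\ to identify the minimal cluster $C$ with the component $[v]$ in the proof of \Cref{lem:ratio}), so the disjointness claim comes essentially for free once that structural fact is in hand. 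One pedantic remark applying to both proofs: the minimality condition in \Cref{def:cluster} should be read as quantifying over nonempty proper subsets (the empty set vacuously satisfies the cluster conditions); your argument is unaffected since you only apply minimality to the nonempty set $C_1\cap C_2$.
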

\begin{proof}
  By \cref{clm:cluster:sc}, $C_1$ and $C_2$ are both strongly connected components.
  If $C_1\cap C_2 \neq \emptyset$, then they must be identical.
\end{proof}

For every subset $C\subseteq V$ of vertices, we define a bad event $B_C$, which occurs if $C$ is a minimal cluster.
Observe that $B_C$ relies only on the status of arcs in $\outedge(C)$.
Thus, if $C_1 \cap C_2 = \emptyset$, then $B_{C_1}$ and $B_{C_2}$ are independent, even if some of their vertices are adjacent.
By \cref{clm:cluster:extremal}, we know that two bad events $B_{C_1}$ and $B_{C_2}$ are either independent or disjoint.
Thus the aforementioned extremal condition is met.
Moreover, it was shown \cite[Theorem 8]{GJL17} that if the instance is extremal, 
then at every step, we only need to resample variables involved in occurring bad events.
This leads to the cluster-popping algorithm of Gorodezky and Pak \cite{GP14}, which is formally described in \cref{alg:cluster-popping}.

\begin{algorithm}
  \caption{Cluster Popping}
  \label{alg:cluster-popping}
  \begin{algorithmic}
    \STATE {Let $S$ be a subset of arcs by choosing each arc $e$ with probability $1-p_e$ independently.}
    \WHILE {There is a cluster in $(V,S)$.}
    \STATE {Let $C_1,\dots,C_k$ be all minimal clusters in $(V,S)$, and $C=\bigcup_{i=1}^k C_i$.}
    \STATE {Re-randomize all arcs in $\outedge(C)$ to get a new $S$.}
    \ENDWHILE
    \RETURN {$S$}
  \end{algorithmic}
\end{algorithm}

The correctness of \Cref{alg:cluster-popping} is first shown by Gorodezky and Pak \cite{GP14}.
It can also be easily verified using \cite[Theorem 8]{GJL17}.

\begin{theorem}[\protect{\cite[Theorem 2.2]{GP14}}]\label{thm:correctness}
  The output of \Cref{alg:cluster-popping} is drawn from $\pi_G$.
\end{theorem}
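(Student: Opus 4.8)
The plan is to recognize \Cref{alg:cluster-popping} as the partial rejection sampling resampler for a particular \emph{extremal} instance and then to invoke the general correctness guarantee \cite[Theorem 8]{GJL17}. First I would fix the instance: one variable per arc $e\in A$ recording whether $e$ has failed (present with probability $1-p_e$, independently), and one bad event $B_C$ for each vertex set $C\subseteq V$, with $B_C$ the event that $C$ is a minimal cluster of the current subgraph. The product distribution $\mu$ then assigns mass $\wt(S)$ to each $S\subseteq A$. By \Cref{clm:cluster:sc,clm:cluster:extremal} distinct minimal clusters are vertex-disjoint, and $B_C$ depends only on the arcs in $\outedge(C)$, all of which lie inside $C$ since $C$ is a cluster; hence disjoint minimal clusters have disjoint variable sets and the corresponding bad events are independent, so any two bad events are disjoint or independent. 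The instance is extremal, and the PRS rule ``resample the variables of all currently occurring bad events'' is exactly ``re-randomize $\outedge\big(\bigcup_i C_i\big)$'', i.e.\ one iteration of \Cref{alg:cluster-popping}.

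Next I would match the conditioned target distribution with $\pi_G$. By \Cref{def:cluster}, ``no bad event occurs'' means the current subgraph has no minimal cluster; and a subgraph has no minimal cluster iff it has no cluster at all (every cluster contains an inclusion-minimal one) iff it is root-connected (a directed path from a vertex of a putative cluster $C$ to $r$ would have to leave $C$ along a forbidden arc, and conversely a cluster-free graph is root-connected). Hence $\mu$ conditioned on avoiding every bad event puts mass $\wt(S)/\Zreach(G,r;\mathbf{p})$ on each root-connected $S$, which is precisely $\pi_G$.

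With these two points established, \cite[Theorem 8]{GJL17} applies directly: for an extremal instance the resampler, conditioned on halting, returns a sample from the product distribution conditioned on no bad event — here $\pi_G$. The last thing to check is that the algorithm halts with probability $1$; this is included in the conclusion of \cite[Theorem 8]{GJL17} whenever the conditioned distribution is well defined (i.e.\ $\Zreach(G,r;\mathbf{p})>0$, automatic in the bi-directed case), and is in any event subsumed by the explicit expected-running-time bound proved in \Cref{sec:run-time}.

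The genuine content here lives inside \cite[Theorem 8]{GJL17}, whose proof rests on an order-independence (``popping'') lemma — along any legal schedule of resampling occurring bad events the sequence of variable values visited, and the multiset of popped events, are determined by the underlying table of i.i.d.\ values — together with a coupling argument identifying the law of the halting configuration. So the expected obstacle is not in the verification above, which is routine, but in that lemma: if a self-contained proof were wanted instead of the citation, re-deriving order-independence directly for minimal clusters, in the spirit of Wilson's analysis of cycle-popping \cite{Wilson96}, would be the main work.
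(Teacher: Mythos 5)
Your proposal is correct and follows essentially the same route as the paper, which establishes the extremal-instance conditions via \Cref{clm:cluster:sc,clm:cluster:extremal} and then appeals to \cite[Theorem 8]{GJL17} (equivalently \cite[Theorem 2.2]{GP14}). One tiny slip worth noting: the arcs of $\outedge(C)$ do not all lie inside $C$ (the event $B_C$ also requires the arcs of $A$ leaving $C$ to be absent); the disjointness of variable sets for disjoint minimal clusters instead follows because each arc is assigned to $\outedge(\cdot)$ by its tail vertex.
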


An advantage of thinking in the \prs\ framework is that we have a closed form formula for the expected running time of these algorithms on extremal instances.
Let $\Omega_k$ be the collection of subgraphs with $k$ minimal clusters,
and 
\begin{align*}
  Z_k := \sum_{S\in\Omega_k} \wt(S).
\end{align*}
Then $Z_0 = \Zreach(G,r;\mathbf{p})$, since any subgraph in $\Omega_0$ has no cluster and is thus root-connected.

\begin{theorem}[\cite{GJL17}]  \label{thm:extremal:runningtime}
  Let $T$ be the number of resampled events of the partial rejection sampling algorithm for extremal instances.
  Then 
  \begin{align*}
    \Ex T = \frac{Z_1}{Z_0}.
  \end{align*}
  In particular, for \cref{alg:cluster-popping}, $T$ is the number of popped clusters.
\end{theorem}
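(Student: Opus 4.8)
The plan is to split $\Ex T$ into a sum of per–event resampling counts, evaluate each count in closed form via the (signed) independence polynomial of the dependency graph, and then recognise that closed form as $Z_1/Z_0$ by inclusion–exclusion. For the first step, regard the algorithm in its general extremal form — of which \cref{alg:cluster-popping} is an instance — as a Markov chain on the product space, absorbing exactly at the configurations with no occurring bad event. Let $\sigma_t$ be the configuration after $t$ resampling rounds, frozen once the chain is absorbed, and let $N(\sigma)$ be the number of bad events occurring in $\sigma$ (so $N$ vanishes on absorbing states). By extremality the bad events occurring in a given configuration are pairwise variable-disjoint, so at round $t$ the algorithm resamples exactly $N(\sigma_t)$ events; hence $T=\sum_{t\ge 0}N(\sigma_t)$ and, interchanging the (nonnegative) sums,
\begin{align*}
  \Ex T=\sum_{t\ge 0}\Ex\,N(\sigma_t)=\sum_i\alpha_i,\qquad \alpha_i:=\Ex\bigl[\#\{\text{times }B_i\text{ is resampled}\}\bigr],
\end{align*}
since $B_i$ is resampled at round $t$ precisely when it occurs in $\sigma_t$.

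The heart of the argument is the closed form for $\alpha_i$. Write $q_i:=\Pr_\mu[B_i]$ for the probability of $B_i$ under the initial product measure $\mu$, and let $P(\mathbf q):=\sum_{I}(-1)^{|I|}\prod_{j\in I}q_j$, the sum ranging over independent sets $I$ of the dependency graph $G_{\mathrm{dep}}$ of the bad events (Shearer's polynomial). I would prove
\begin{align*}
  \alpha_i=\frac{-\,q_i\,\partial_{q_i}P(\mathbf q)}{P(\mathbf q)}.
\end{align*}
Because \prs\ on an extremal instance coincides with the Moser--Tardos resampling algorithm \cite{GJL17,MT10}, one may run the Moser--Tardos witness-tree calculus: $\alpha_i$ is dominated by the sum of $\prod_v q_{\mathrm{label}(v)}$ over witness trees rooted at $B_i$, where a node labelled $B_j$ may only have children labelled $B_j$ itself or a $G_{\mathrm{dep}}$-neighbour of $B_j$. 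A short induction over subtrees evaluates the generating function of these trees to the ratio above. The conceptually transparent version of the same fact is a weight-preserving injection from partial executions that perform one more resampling of $B_i$ into completed executions carrying a recorded occurrence of $B_i$; the cumulative ``cost'' of that record is exactly $\Pr_\mu[\text{only }B_i\text{ occurs}]/Z_0$.

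Finally, match with $Z_1/Z_0$ by inclusion--exclusion. Expanding $Z_0=\Pr_\mu[\text{no bad event}]$, the term $\Pr_\mu\bigl[\bigwedge_{j\in I}B_j\bigr]$ vanishes whenever $I$ is not independent in $G_{\mathrm{dep}}$ (two adjacent bad events are mutually exclusive, by extremality) and equals $\prod_{j\in I}q_j$ when $I$ is independent (the events being then mutually independent); hence $Z_0=P(\mathbf q)$. The same computation applied to $\Pr_\mu[\text{only }B_i\text{ occurs}]$ gives $\Pr_\mu[\text{only }B_i]=-q_i\,\partial_{q_i}P(\mathbf q)$, so $Z_1=\sum_i\Pr_\mu[\text{only }B_i]=-\sum_i q_i\,\partial_{q_i}P(\mathbf q)$. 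Combining the three steps,
\begin{align*}
  \Ex T=\sum_i\alpha_i=\frac{-\sum_i q_i\,\partial_{q_i}P(\mathbf q)}{P(\mathbf q)}=\frac{Z_1}{Z_0}.
\end{align*}

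The main obstacle is the second step: obtaining the \emph{exact} value of $\alpha_i$ rather than merely the Moser--Tardos upper bound given by the witness-tree sum. The crux is to exploit the extremal hypothesis — that adjacent bad events are mutually exclusive — to force that inequality to be an equality, equivalently to make the ``reverse the last resampling of $B_i$'' map genuinely injective and account precisely for the randomness it discards. The other two steps are routine: the first is immediate from the description of the algorithm together with the pairwise disjointness of simultaneously occurring bad events, and the third is a two-line inclusion--exclusion.
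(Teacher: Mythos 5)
First, a point of reference: the paper does not actually prove this statement --- it is imported from \cite{GJL17}, with only the remark that it ``can be shown via manipulating generating functions'' and that the $\leq$ direction is due to Kolipaka and Szegedy \cite{KS11}. So your proposal can only be judged against that intended route, which is indeed the one you follow. Your first and third steps are correct and essentially complete: the decomposition $\Ex T=\sum_i\alpha_i$ is immediate from the description of the algorithm (simultaneously occurring bad events are variable-disjoint, so each round resamples exactly the occurring events), and the inclusion--exclusion identifications $Z_0=P(\mathbf q)$ and $Z_1=-\sum_i q_i\,\partial_{q_i}P(\mathbf q)$ are valid for extremal instances, since non-independent index sets contribute zero (two dependent bad events are disjoint) and independent index sets factorize (the events live on disjoint variable sets).

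The genuine gap is the one you flag yourself: the exact identity $\alpha_i=-q_i\,\partial_{q_i}P(\mathbf q)/P(\mathbf q)$ is not an auxiliary computation but is essentially the whole content of the cited theorem, and neither of your two suggested strategies is carried out. Worse, the witness-tree route as you describe it can only deliver the inequality: the Moser--Tardos witness-tree sum, and even the tighter Kolipaka--Szegedy count via stable-set sequences, are a priori union bounds, because distinct witnesses may certify overlapping sets of executions. The extremal-specific input that upgrades the bound to an equality is a conditional-distribution statement (the ``conditional Gibbs'' property in \cite{GJL17}): at every round, conditioned on the set $I$ of currently occurring bad events, the configuration is distributed as the product measure conditioned on exactly the events in $I$ occurring; equivalently, for extremal instances each execution log is certified by exactly one stable-set sequence, so the events being union-bounded are in fact disjoint. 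Saying ``the crux is to make the reverse map genuinely injective'' correctly locates the difficulty but does not resolve it. As written, your argument establishes only $\Ex T\le Z_1/Z_0$ --- which, as the paper notes, is the direction actually used later, but it is not the stated theorem.
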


\Cref{thm:extremal:runningtime} can be shown via manipulating generating functions.
The less-than-or-equal-to direction of \Cref{thm:extremal:runningtime} was shown by Kolipaka
and Szegedy~\cite{KS11}, which is the direction we will need later.
The other direction is useful to show running-time lower bounds,
but that is not our focus in this paper.

\section{Running time of \texorpdfstring{\cref{alg:cluster-popping}}{Algorithm 1} in bi-directed graphs}\label{sec:run-time}

Gorodezky and Pak \cite{GP14} have given examples of directed graphs in which \cref{alg:cluster-popping} requires exponential time.
In the following we focus on bi-directed graphs.
A graph $G$ is called \emph{bi-directed} if $u\rightarrow v$ is present in $G$, then $v\rightarrow u$ is present in $G$ as well,
and the failure probabilities are the same for these two arcs.
We use \bireach\ to denote \reach\ in bi-directed graphs.
For an arc $e=u\rightarrow v$, let $\reverse{e} := v\rightarrow u$ denote its reverse arc.
Then in a bi-directed graph, $p_e = p_{\reverse{e}}$.

\begin{lemma}  \label{lem:ratio}
  Let $G=(V,A)$ be a root-connected bi-directed graph with root $r$.
  We have that $Z_1\le \max_{e\in A} \left\{ \frac{p_e}{1-p_e} \right\}mn Z_0$, where $n = \abs{V}$, and $m = \abs{A}$.
\end{lemma}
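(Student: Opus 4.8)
The plan is to bound $Z_1$ by constructing an injective weight-comparable map from subgraphs with exactly one minimal cluster to root-connected subgraphs. Concretely, let $S \in \Omega_1$ with unique minimal cluster $C$. I want to define $\Phi(S)$ to be a root-connected subgraph together with a small amount of bookkeeping data (a vertex and an arc) so that $S$ can be recovered from $\Phi(S)$; the bookkeeping will account for the factors $m$ and $n$, and the change of weight along the map will be absorbed into $\max_{e \in A}\frac{p_e}{1-p_e}$. The reason bi-directedness is essential: by \cref{clm:cluster:sc} the minimal cluster $C$ is strongly connected, and since $G$ is bi-directed and root-connected, there is at least one arc $u \to v$ with $u \notin C$, $v \in C$ entering $C$ (the reverse of an arc leaving the connected part); its twin $v \to u$ is absent from $S$, since $C$ has no outgoing arc in $S$. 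The idea is to add exactly one such ``escape'' arc $v \to u$ to $S$. Adding it can destroy the cluster $C$ (now $v$, and hence all of $C$ by strong connectivity, reaches outside), but it might create new clusters or fail to make the graph root-connected, so in general one must iterate: repeatedly, while a cluster remains, pick a minimal one, add an entering arc's reverse twin to escape it.

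The key steps, in order:
\begin{enumerate}
\item Show that from any $S$ with at least one cluster, one can always find a minimal cluster $C$ and a vertex $v \in C$ with an arc $v \to u \in A \setminus S$, $u \notin C$ — this uses root-connectedness of $G$ and bi-directedness (the entering arc exists, its twin is absent).
\item Define the ``fixing'' procedure: add $v \to u$ to $S$; this strictly decreases some potential (e.g.\ the number of vertices that cannot reach $r$, or a lexicographic measure on the partition into strongly connected components), so the procedure terminates with a root-connected graph $S'$.
\item To make this reversible, record at each step just enough: since we start in $\Omega_1$ with a \emph{unique} minimal cluster, I would aim to show only one fixing step is ever needed, or bound the number of steps — then the recorded data is an endpoint vertex $u$ (at most $n$ choices) and the added arc $v\to u$ (at most $m$ choices, but really determined once we also know we must \emph{remove} this arc to invert). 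So $\Phi: \Omega_1 \to \Omega_0 \times V \times A$, and injectivity follows because from $(S', u, e)$ we delete $e$ and check that the component of $u$ that becomes a cluster is exactly $C$.
\item Compare weights: $\wt(S') = \wt(S) \cdot \frac{1 - p_e}{p_e}$ for each added arc $e$, so $\wt(S) \le \max_e \frac{p_e}{1-p_e} \cdot \wt(S')$; summing over $S \in \Omega_1$, grouping by image, and using that each $(S',u,e)$ is hit at most once gives $Z_1 \le \max_e \frac{p_e}{1-p_e}\, m n\, Z_0$.
\end{enumerate}

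The main obstacle is Step 2–3: controlling the fixing procedure well enough that the recovery data is small. If adding one escape arc can cascade into many new minimal clusters, a single $(v,u)$ pair will not suffice to invert, and the bookkeeping could blow up. I expect the resolution to exploit the hypothesis $S \in \Omega_1$ strongly — in a bi-directed graph, having a \emph{single} minimal cluster is quite restrictive, and I would try to prove that adding the twin of one entering arc immediately yields a root-connected graph (no new clusters appear, because any new cluster would have had to be a cluster already, or would contradict uniqueness/bi-directedness). If that clean statement holds, the map is literally ``add one arc, remember its head $u$ and the arc $e$,'' the weight ratio is a single factor of $\frac{p_e}{1-p_e} \le \max_e \frac{p_e}{1-p_e}$, and the injectivity and the $mn$ overhead are immediate. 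The secondary obstacle is a careful argument that distinct $S$ cannot map to the same triple, which should reduce to: the minimal cluster $C$ is recovered as the strongly connected component of $u$ in $(V, S' \setminus \{e\})$ that has no outgoing arc, and then $S = S' \setminus \{e\}$.
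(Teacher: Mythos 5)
Your overall architecture is the right one and matches the paper's: an injection from $\Omega_1$ into $\Omega_0\times V\times A$ in which the image subgraph differs from $S$ in weight by a single factor $\frac{p_e}{1-p_e}$, with the $V\times A$ coordinates paying for the $mn$. But the mechanism you propose for producing the root-connected subgraph has a genuine gap, and you have correctly located it yourself: the hoped-for ``clean statement'' that adding the reverse twin of one arc entering the unique minimal cluster already yields a root-connected graph is false. Take vertices $r,a,b$ with arcs $r\leftrightarrow a$ and $a\leftrightarrow b$, and $S=\{a\to b\}$. The unique minimal cluster is $\{b\}$ (the set $\{a,b\}$ is a cluster but not a minimal one), and the only escape arc from it is $b\to a$; adding it leaves $\{a,b\}$ as a now-minimal cluster, so the procedure must iterate. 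More generally, the set $U$ of vertices that cannot reach $r$ can be a long chain of strongly connected components whose only sink is the unique minimal cluster, and your procedure then adds $\Theta(n)$ arcs before terminating. That destroys both halves of the argument: the weight ratio becomes $\bigl(\frac{p}{1-p}\bigr)^{\Theta(n)}$ rather than a single factor, and the data needed to invert the map grows to $\Theta(n)$ arcs, so the overhead is no longer $mn$.

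The idea you are missing is that bi-directedness buys a \emph{weight-neutral} repair operation: reversing an arc ($x\to y$ becomes $y\to x$) does not change $\wt$ because $p_e=p_{\reverse{e}}$, and the reversed arc is guaranteed to exist in $A$. The paper exploits this as follows. Let $R$ be the set of vertices that can reach $r$ in $(V,S)$ and $U=V\setminus R$; pick the first arc $u\to u'$ of $A$ with $u\in U$ and $u'\in R$ (it exists since $G$ is root-connected). In the condensation DAG of $(U,S[U])$ the minimal cluster $[v]$ is the unique sink; reverse all inter-component arcs of $S$ on the portion reachable from $[u]$, which turns $[u]$ into a sink and $[v]$ into a source of that portion, and then add the \emph{single} arc $u\to u'$. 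The result is root-connected, the weight changes by exactly one factor $\frac{1-p_{u\to u'}}{p_{u\to u'}}$, and recording only the vertex $v$ (from which the flipped region can be recovered) and the arc $u\to u'$ (to be deleted) suffices for injectivity. Without some such weight-neutral rerouting, the ``add escape arcs until connected'' strategy cannot give the claimed bound.
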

\begin{proof}
  We construct an injective mapping $\varphi:\Omega_1\rightarrow \Omega_0\times V\times A$.
  For each subgraph $S\in\Omega_1$, $\varphi(S)$ is defined by ``repairing'' $S$ so that no minimal cluster is present.
  We choose in advance an arbitrary ordering of vertices and arcs.
  Let $C$ be the unique minimal cluster in $S$ and $v$ be the first vertex in $C$.
  Let $R$ denote the set of all vertices which can reach the root $r$ in the subgraph $S$.
  Since $S\in\Omega_1$, $R\neq V$.
  Let $U = V\setminus R$.
  Since $G$ is root-connected, there is an arc in $A$ from $U$ to $R$.
  Let $u \rightarrow u'$ be the first such arc, where $u\in U$ and $u'\in R$.
  We let
  \begin{align*}
    \varphi(S) := (\sfix,v,u\rightarrow u'),
  \end{align*}
  where $\sfix\in\Omega_0$ is defined next.
  \Cref{fig:illustration-Z1} is an illustration of these objects.

  \begin{figure}[htbp]
    \centering
      \begin{tikzpicture}[scale=0.7, inner sep=2pt, transform shape]
        \draw (0,0) ellipse (4 and 2) [thick];
        \draw (5,1.5) node {\LARGE $(V,S)$};
        \begin{scope}
          \clip (0,0) ellipse (4 and 2);
          \draw (3,0) ellipse (3 and 3);
        \end{scope}
        \draw (3,0.6) node [draw,fill,shape=circle,color=black, label=0:{\LARGE $r$}] {};
        \draw (1.8,-0.3) node {\LARGE $R$};
        \draw (-1.8,0) node {\LARGE $U$};

        \draw (-0.1,1.5) node [draw,fill,shape=circle,color=black, label=180:{\LARGE $u$}] (u) {};
        \draw  (0.8,1.1) node [draw,fill,shape=circle,color=black, label=0:{\LARGE $u'$}] (u'){};
        \path (u) edge [dashed, thick, ->] (u');
      \end{tikzpicture}  
    \caption{An illustration of $R$, $U$, and $u\rightarrow u'$.}
    \label{fig:illustration-Z1}
  \end{figure}
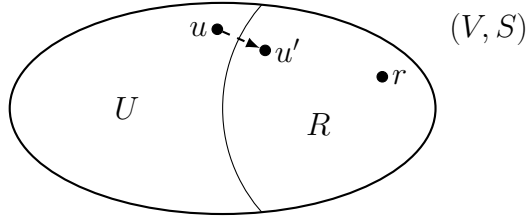

  Consider the subgraph $H=(U,S[U])$, where 
  \begin{align*}
    S[U] := \{x\rightarrow y\mid x\in U,\; y\in U,\; x\rightarrow y\in S\}.
  \end{align*}
  We consider the directed acyclic graph (DAG) of strongly connected components of $H$, and call it $\widehat{H}$.
  (We use the decoration $\widehat{\;}$ to denote arcs, vertices, etc.\ in $\widehat{H}$.)  
  To be more precise, we replace each strongly connected component by a single vertex.
  For a vertex $w\in U$, let $[w]$ denote the strongly connected component containing $w$.
  For example, $[v]$ is the same as the minimal cluster $C$ by \cref{clm:cluster:sc}.
  We may also view $[w]$ as a vertex in $\widehat{H}$ and we do not distinguish the two views.
  The arcs in $\widehat{H}$ are naturally induced by $S[U]$.
  Namely, for $[x]\neq[y]$, an arc $[x]\rightarrow [y]$ is present in $\widehat{H}$ if there exists $x'\in[x]$, $y'\in[y]$ such that $x'\rightarrow y' \in S$.

  We claim that $\widehat{H}$ is root-connected with root $[v]$. 
  This is because $[v]$ must be the unique sink in $\widehat{H}$ and $\widehat{H}$ is acyclic.
  If there is another sink $[w]$ where $v\not\in[w]$, then $[w]$ is a minimal cluster in $H$.
  This contradicts $S\in\Omega_1$.

  Since $\widehat{H}$ is root-connected, there is at least one path from $[u]$ to $[v]$.
  Let $\widehat{W}$ denote the set of vertices of $\widehat{H}$ that can be reached from $[u]$ in $\widehat{H}$ (including $[u]$),
  and $W:= \{x\mid [x]\in \widehat{W}\}$.
  Then $W$ is a cluster and $[u]$ is the unique source in $\widehat{H}[\widehat{W}]$.
  As $\widehat{H}$ is root-connected, $[v]\in\widehat{W}$.
  Define
  \begin{align*}
    S_{\flip} := \big\{x\rightarrow y\bigm| [x]\neq[y],\; x,y\in W,\; \text{and }x\rightarrow y \in S\big\},
  \end{align*}
  which is the set of edges to be flipped.
  Notice that $S[W]$ is different from $S_{\flip}$, namely all arcs that are inside strongly connected components are ignored in $S_{\flip}$.
  Now we are ready to define \sfix.
  We reverse all arcs in $S_{\flip}$ and add the arc $u\rightarrow u'$ to fix the minimal cluster.
  Formally, let
  \begin{align*}
    \sfix := S \cup \{u\rightarrow u'\} \cup \{ y\rightarrow x \mid x\rightarrow y\in S_{\flip}\} \setminus S_{\flip}.
  \end{align*}
  \Cref{fig:illustration-S-flip} is an example of these objects we defined.

  \begin{figure}[htbp]
    \centering
      \begin{tikzpicture}[scale=1, inner sep=2pt, transform shape]
        \draw (4,0)    node [draw,fill,shape=circle,color=black, label=0:{\LARGE $r$}] (r) {};
        \draw (2,1.5)  node [draw,fill,shape=circle,color=black, label=5:{\LARGE $u'$}] (u') {};
        \draw (2,0)    node [draw,fill,shape=circle,color=black] (v2) {};
        \draw (2,-1.5) node [draw,fill,shape=circle,color=black] (v3) {};
        \draw (u') edge [thick, ->] (r);
        \draw (u') edge [thick, ->] (v2);        
        \draw (v2) edge [thick, ->] (r);
        \draw (v3) edge [thick, ->, bend left] (r);
        \draw (r)  edge [thick, ->, bend left] (v3);

        \draw (1,0)  node [draw,fill,shape=circle,color=black] (w1) {};
        \draw (0,1)  node [draw,fill,shape=circle,color=black, label=90:{\LARGE $u$}] (u) {};
        \draw (0,-1) node [draw,fill,shape=circle,color=black] (w3) {};
        \draw (w1) edge [thick, ->] (u);
        \draw (u) edge [thick, ->] (w3);
        \draw (w3) edge [thick, ->] (w1);

        \draw (u) edge [thick, dashed, ->] (u');

        \draw (-2, 1) node [draw,fill,shape=circle,color=black] (x1) {};
        \draw (-2,-1) node [draw,fill,shape=circle,color=black, label=225:{\LARGE $v$}] (x2) {};

        \draw (x1) edge [thick, ->, bend left] (x2);
        \draw (x2) edge [thick, ->, bend left] (x1);

        \draw (u)  edge [thick, ->, bend right, color = red] (x1);
        \draw (x1) edge [thick, ->, bend right, dashed] (u);
        \draw (w3) edge [thick, ->, bend right, color = red] (x2);
        \draw (x2) edge [thick, ->, bend right, dashed] (w3); 

        \draw (4.9,1) node {\LARGE $R$};
        \draw (-3,0) node {\LARGE $U$};

        \draw (-2.5,0)    node (aux1) {};
        \draw (-1.5,1.5)  node (aux2) {};
        \draw (-1.5,-1.5) node (aux3) {};

        \draw (2,1.8) node (aux4) {};
        \draw (1.8,0) node (aux5) {};
        \draw (4.3,0) node (aux6) {};

        \begin{pgfonlayer}{background}
          \node[inner sep=4pt,transform shape=false,draw=\borderColor,thick,rounded corners,fit = (x1) (x2) (u) (w1) (w3) (aux1) (aux2) (aux3)] {};
          \node[inner sep=4pt,transform shape=false,draw=\borderColor,thick,rounded corners,fit = (r) (v2) (v3) (u') (aux4) (aux5) (aux6)] {};
        \end{pgfonlayer}
      \end{tikzpicture}  
      \caption{An example of $S_{\flip}$ (red arcs) in the subgraph $(V,S)$. 
      Dashed arcs are to be added to $\sfix$. 
      The underlying graph has more arcs than are drawn here.}
    \label{fig:illustration-S-flip}
  \end{figure}
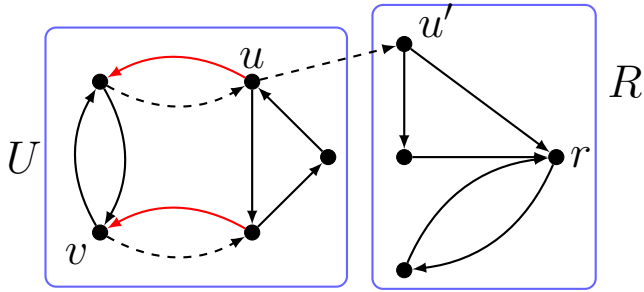

  Let $\widehat{H}_{\fix}$ be the graph obtained from $\widehat{H}$ by reversing all arcs induced by $S_{\flip}$.
  Observe that $[u]$ becomes the unique sink in $\widehat{H}_{\fix}[\widehat{W}]$ (and $[v]$ becomes the unique source).

  We verify that $\sfix\in\Omega_0$.
  For any $x\in R$, $x$ can still reach $r$ in $(V,\sfix)$ since the path from $x$ to $r$ in $(V,S)$ is not changed.
  Since $u\rightarrow u'\in\sfix$, $u$ can reach $u'\in R$ and hence $r$.
  For any $y\in W$, $y$ can reach $u$ as $[u]$ is the unique sink in $\widehat{H}_{\fix}[\widehat{W}]$.
  For any $z\in U\setminus W$, $z$ can reach $v\in W$ since the path from $z$ to $v$ in $(V,S)$ is not changed.

  Next we verify that $\varphi$ is injective.
  To do so, we show that we can recover $S$ given $\sfix$, $u\rightarrow u'$, and $v$.
  First remove $u\rightarrow u'$ from $\sfix$.
  The set of vertices which can reach $r$ in $(V,\sfix\setminus\{u\rightarrow u'\})$ is exactly $R$ in $(V,S)$.
  Namely we can recover $U$ and $R$.
  As a consequence, we can recover all arcs in $S$ that are incident with~$R$, as these arcs are not changed.

  What is left to do is to recover arcs in $S[U]$.
  To do so, we need to find out which arcs have been flipped.
  We claim that $\widehat{H}_{\fix}$ is acyclic.
  Suppose there is a cycle in $\widehat{H}_{\fix}$.
  Since $\widehat{H}$ is acyclic, the cycle must involve flipped arcs and thus vertices in $\widehat{W}$.
  Let $[x]\in\widehat{W}$ be the lowest one under the topological ordering of $\widehat{H}[\widehat{W}]$.
  Since $\widehat{W}$ is a cluster, the outgoing arc $[x]\rightarrow [y]$ along the cycle in $\widehat{H}_{\fix}$ must have been flipped, 
  implying that $[y]\in\widehat{W}$ and $[y]\rightarrow [x]$ is in $\widehat{H}[\widehat{W}]$.
  This contradicts to the minimality of $[x]$.

  Since $\widehat{H}_{\fix}$ is acyclic, the strongly connected components of $H_{\fix}:=(U,\sfix[U])$ are identical to those of $H=(U,S[U])$.
  Hence contracting all strongly connected components of $H_{\fix}$ results in exactly $\widehat{H}_{\fix}$.
  All we need to recover now is the set $\widehat{W}$.
  Let $\widehat{W}'$ be the set of vertices reachable from $[v]$ in $\widehat{H}_{\fix}$.
  It is easy to see that $\widehat{W} \subseteq \widehat{W}'$.
  We claim that actually $\widehat{W} = \widehat{W}'$.
  For any $[x]\in\widehat{W}'$,
  there is a path from $[v]$ to $[x]$ in $\widehat{H}_{\fix}$.
  Suppose $[x]\not\in\widehat{W}$.
  Since $[v]\in\widehat{W}$, we may assume that $[y]$ is the first vertex along the path such that $[y]\rightarrow[z]$ where $[z]\not\in\widehat{W}$.
  Thus $[y]\rightarrow[z]$ has not been flipped and is present in $\widehat{H}$.
  However, this contradicts the fact that $\widehat{W}$ is a cluster in $\widehat{H}$.

  To summarize, given $\sfix$, $u\rightarrow u'$, and $v$,
  we may uniquely recover $S$.
  Hence the mapping $\varphi$ is injective.
  Moreover, flipping arcs does not change the weight as $p_e = p_{\reverse{e}}$,
  and only adding the arc $u\rightarrow u'$ would.
  We have that $\wt(\sfix) = \frac{1-p_{u\rightarrow u'}}{p_{u\rightarrow u'}}\wt(S)$.
  The lemma follows.
\end{proof}

We remark that an alternative way of repairing $S$ in the proof above is to reverse all arcs in $S[W]$ without defining $S_{\flip}$.
The key point is that doing so leaves the strongly connected components intact.
However this makes the argument less intuitive.

Let $p_{\maxx} = \max_{e\in A}p_e$. Combining \cref{thm:extremal:runningtime} and \cref{lem:ratio}, we have the following theorem.
Notice that for each popping, we resample only a subset of arcs.

\begin{theorem}\label{thm:fpaus}  
  Let $T$ be the expected number of popped clusters in \cref{alg:cluster-popping}.
  For a root-connected bi-directed graph $G=(V,A)$, $\Ex T\le \frac{p_{\maxx}}{1-p_{\maxx}} mn$, where $n = \abs{V}$, and $m = \abs{A}$.
  The expected running time is asymptotically at most $\frac{p_{\maxx}}{1-p_{\maxx}} m^2n$.
\end{theorem}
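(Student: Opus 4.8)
The plan is to combine the two facts already established. Since cluster-popping is an extremal instance of partial rejection sampling (\Cref{clm:cluster:extremal} verifies the extremal condition), \Cref{thm:extremal:runningtime} gives $\Ex T = Z_1/Z_0$, where $T$ is the (random) number of popped clusters. \Cref{lem:ratio} bounds $Z_1 \le \max_{e\in A}\{p_e/(1-p_e)\}\,mn\,Z_0$, hence $\Ex T \le \max_{e\in A}\{p_e/(1-p_e)\}\,mn$. The first claim of the theorem then follows from the elementary observation that $x\mapsto x/(1-x)$ is monotone increasing on $(0,1)$, so $\max_{e\in A} p_e/(1-p_e) = p_{\maxx}/(1-p_{\maxx})$.

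For the running-time bound I would argue that the total work of \cref{alg:cluster-popping} is $O(m)$ times the number of iterations of the while loop, and that the number of iterations is at most $T$. Each iteration costs $O(m+n)$: computing the strongly connected components of $(V,S)$ (e.g.\ by Tarjan's algorithm) identifies the minimal clusters as precisely the sink components of the condensation that do not contain $r$ — one direction is \Cref{clm:cluster:sc}, and conversely a sink SCC avoiding $r$ has no outgoing arc so is a cluster, and is minimal because any proper subset has an outgoing arc by strong connectivity — and re-randomizing $\outedge(C)$ touches at most $m$ arcs. Since every iteration pops at least one minimal cluster, the number of iterations is bounded by the total number of popped clusters, namely $T$. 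As $G$ is root-connected we have $m\ge n-1$, so $O(m+n)=O(m)$, and taking expectations gives expected total running time $O(m)\cdot\Ex T \le \frac{p_{\maxx}}{1-p_{\maxx}}\,m^2 n$, as claimed (with the initial randomization of all arcs contributing only a further $O(m)$).

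No serious obstacle remains, since the combinatorial substance lies in \Cref{lem:ratio}, already proved. The one point needing a little care is the bookkeeping flagged by the remark that ``for each popping we resample only a subset of arcs'': a single while-loop iteration may pop several minimal clusters simultaneously, yet its cost is still only $O(m)$ regardless of how many, so charging $O(m)$ to each of the at most $T$ iterations — rather than attempting a per-cluster accounting — yields the stated bound. (If a per-cluster accounting were wanted, one could use \Cref{clm:cluster:extremal} to note that the minimal clusters popped in one round are vertex-disjoint, so their out-edge sets are disjoint subsets of $A$; this is not needed for the asymptotic statement.)
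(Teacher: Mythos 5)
Your proposal is correct and follows exactly the paper's (very terse) argument: combine \Cref{thm:extremal:runningtime} with \Cref{lem:ratio} to get $\Ex T\le \frac{p_{\maxx}}{1-p_{\maxx}}mn$, then observe that each popping step touches at most $m$ arcs, giving the $O(m)\cdot \Ex T$ running-time bound. Your additional details (Tarjan's algorithm for locating minimal clusters as sink SCCs of the condensation, and the per-iteration versus per-cluster accounting) are correct elaborations of what the paper leaves implicit.
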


\section{Approximate counting}\label{sec:counting}

We include the approximate counting algorithm of Gorodezky and Pak \cite{GP14} for completeness.
Let $G=(V,A)$ be an instance of \bireach\ with root $r$ and parameters~$\mathbf{p}$.
We construct a sequence of graphs $G_0,..,G_{n-1}$ where $n=\abs{V}$ and $G_0=G$.
Given $G_{i-1}$, choose two arbitrary adjacent vertices $u_i$ and $v_i$, 
remove all arcs between $u_i$ and $v_i$ (in either direction),
and identify $u_i$ and $v_i$ to get $G_i =(V_i,A_i)$.
Namely we contract all arcs between $u_i$ and $v_i$, 
but parallel arcs in the resulting graph are preserved.
If one of $u_i$ and $v_i$ is $r$, the new vertex is labelled $r$.
Thus $G_{n-1}=(\{r\},\emptyset)$.
Since $A_i$ is always a subset of $A$,
we denote by $\mathbf{p}_i$ the parameters $\mathbf{p}$ restricted to $A_i$.

For $i=1,\dots,n-1$, define a random variable $R_i$ as follows:
\begin{align*}
  R_i :=
  \begin{cases}
    1 & \text{$(V_{i-1},S_{i-1})$ is root-connected in $G_{i-1}$;}\\
    0 & \text{otherwise,}
  \end{cases}
\end{align*}
where $S_{i-1}\subset A_{i-1}$ is a random root-connected subgraph drawn from the distribution $\pi_{G_i}(\cdot)$,
together with all arcs $e$ between $u_i$ and $v_i$ added independently with probability $1-p_e$.
It is easy to see that
\begin{align*}
  \Ex R_i = \frac{\Zreach(G_{i-1},r;\mathbf{p}_{i-1})}{\Zreach(G_{i},r;\mathbf{p}_i)},
\end{align*}
and 
\begin{align*}
  \Zreach(G,r;\mathbf{p}) = \prod_{i=1}^{n-1} \Ex R_i.
\end{align*}

Let $p_{\maxx} = \max_{e\in A}p_e$ and $s = \ceil{5(1-p_{\maxx})^{-2}(n-1)\eps^{-2}}$ where $s$ is the desired precision.
We estimate $\Ex R_i$ by the empirical mean of $s$ independent samples of $Z_i$, denoted by $\widetilde{R}_i$,
and let $\widetilde{Z}=\prod_{i=1}^{n-1} \widetilde{R}_i$ and $Z=\Zreach(G,r;\mathbf{p})$.
Gorodezky and Pak \cite{GP14} showed the following.

\begin{proposition}[\protect{\cite[Section 9]{GP14}}]
  $\Pr\left(\abs{Z-\widetilde{Z}}>\eps Z\right)\leq 1/4$.
\end{proposition}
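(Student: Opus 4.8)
The plan is to show that the product estimator $\widetilde Z=\prod_{i=1}^{n-1}\widetilde R_i$ concentrates around $Z=\prod_{i=1}^{n-1}\Ex R_i$ by controlling the variance of each factor. First I would observe that each $R_i$ is a Bernoulli random variable, since it takes values in $\{0,1\}$; hence $\Var R_i = \Ex R_i(1-\Ex R_i)\le \Ex R_i$, and so the \emph{relative} variance of a single sample is $\Var R_i/(\Ex R_i)^2 \le (1-\Ex R_i)/\Ex R_i \le 1/\Ex R_i - 1$. The key quantitative input is a lower bound on $\Ex R_i$. Here I would use the explicit formula $\Ex R_i = \Zreach(G_{i-1},r;\mathbf p_{i-1})/\Zreach(G_i,r;\mathbf p_i)$ together with the combinatorial meaning of the contraction step: a root-connected subgraph of $G_{i-1}$ restricts to a root-connected subgraph of $G_i$, and conversely any root-connected subgraph of $G_i$ together with \emph{at least one} surviving arc between $u_i$ and $v_i$ yields a root-connected subgraph of $G_{i-1}$. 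Since some arc between $u_i$ and $v_i$ survives with probability at least $1-p_{\maxx}$ (there is at least one such arc, as $u_i,v_i$ are adjacent), a short weight-counting argument gives $\Ex R_i\ge 1-p_{\maxx}$.

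Next I would bound the variance of the empirical mean $\widetilde R_i$ over $s$ independent samples: $\Var \widetilde R_i/(\Ex R_i)^2 = \frac1s\cdot\Var R_i/(\Ex R_i)^2 \le \frac1s\bigl(\tfrac{1}{1-p_{\maxx}}-1\bigr)\le \frac{1}{s(1-p_{\maxx})}$. Since the $\widetilde R_i$ are independent across $i$, the relative variance of the product satisfies
\begin{align*}
  \frac{\Var \widetilde Z}{(\Ex\widetilde Z)^2}
  = \prod_{i=1}^{n-1}\left(1+\frac{\Var\widetilde R_i}{(\Ex R_i)^2}\right) - 1
  \le \left(1+\frac{1}{s(1-p_{\maxx})}\right)^{n-1} - 1.
\end{align*}
Plugging in $s=\ceil{5(1-p_{\maxx})^{-2}(n-1)\eps^{-2}}$ makes each factor at most $1+\tfrac{\eps^2}{5(n-1)}$, so the product is at most $\exp(\eps^2/5)-1\le \eps^2/4$ (using $\eps\le 1$ and a routine estimate $e^{x}-1\le \tfrac54 x$ for small $x$, or simply $\eps^2/5 + O(\eps^4)$). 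Finally I would apply Chebyshev's inequality to $\widetilde Z$: since $\Ex\widetilde Z = \prod\Ex R_i = Z$ by independence, $\Pr(\abs{Z-\widetilde Z}>\eps Z)\le \Var\widetilde Z/(\eps Z)^2 = \eps^{-2}\cdot\Var\widetilde Z/(\Ex\widetilde Z)^2 \le 1/4$.

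The main obstacle is the lower bound $\Ex R_i\ge 1-p_{\maxx}$; everything else is a standard product-estimator variance computation. The subtlety is that $R_i$ is defined relative to a sample $S_{i-1}$ from $\pi_{G_i}$ \emph{augmented} by the arcs between $u_i$ and $v_i$, so I need to argue carefully that (a) this augmented sample always gives a root-connected subgraph of $G_i$ after contraction — which it does by construction — and (b) it is root-connected already in $G_{i-1}$ precisely when at least one $u_i$–$v_i$ arc survives \emph{and} the two endpoints are not separated in a way that breaks connectivity; in fact, once the contracted graph is root-connected and some arc joins the fibre over the identified vertex, root-connectivity in $G_{i-1}$ follows, giving the clean bound $\Ex R_i \ge \Pr(\text{some }u_i\text{–}v_i\text{ arc survives})\ge 1-p_{\maxx}$. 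I would double-check the edge case where the only arcs between $u_i$ and $v_i$ in $G_{i-1}$ have already been deleted at an earlier contraction — but by the construction they are chosen adjacent in $G_{i-1}$, so at least one such arc is present at step $i$, and the bound stands.
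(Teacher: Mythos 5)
The paper itself does not prove this proposition---it cites \cite[Section 9]{GP14}---so your proposal is supplying an argument the paper omits. Your overall strategy, the standard product-estimator analysis (relative variance of each factor, independence across factors, Chebyshev), is the right one, and the bookkeeping is correct: $\Ex\widetilde Z=Z$, the relative second moment of the product factorises, and with $s=\ceil{5(1-p_{\maxx})^{-2}(n-1)\eps^{-2}}$ each factor contributes at most $1+\eps^2/(5(n-1))$, so $\mathrm{Var}(\widetilde Z)/Z^2\le e^{\eps^2/5}-1\le\eps^2/4$ and Chebyshev closes the argument.

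The genuine gap is your key lower bound $\Ex R_i\ge 1-p_{\maxx}$, justified by ``once the contracted graph is root-connected and some arc joins the fibre over the identified vertex, root-connectivity in $G_{i-1}$ follows.'' This is false: the arcs between $u_i$ and $v_i$ are \emph{directed}, and a single surviving arc restores reachability in only one direction. Concretely, take $G_{i-1}$ to be the bi-directed triangle on $\{r,u_i,v_i\}$ and let the sample from $\pi_{G_i}$ contain $u_i\to r$ but not $v_i\to r$ (a perfectly good root-connected subgraph of $G_i$, where $u_i$ and $v_i$ are identified); if the arcs added between $u_i$ and $v_i$ consist of $u_i\to v_i$ alone, then $v_i$ has no surviving outgoing arc at all and $R_i=0$, even though ``some arc survives.'' What is true is that if \emph{both} anti-parallel arcs survive, then $u_i$ and $v_i$ are mutually reachable and root-connectivity in $G_{i-1}$ follows from root-connectivity in $G_i$; this gives $\Ex R_i\ge(1-p_{\maxx})^2$. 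This weaker bound is precisely why the prescribed sample size carries the factor $(1-p_{\maxx})^{-2}$ rather than $(1-p_{\maxx})^{-1}$. With it, your computation goes through verbatim: $\mathrm{Var}(\widetilde R_i)/(\Ex R_i)^2\le 1/\bigl(s(1-p_{\maxx})^{2}\bigr)\le\eps^2/(5(n-1))$, and the remainder of your argument is unchanged. So the proposition is salvaged, but the reachability claim as you stated it must be corrected, and the ``edge case'' you flagged at the end (adjacency of $u_i,v_i$) was not the one that needed checking---directionality was.
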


In order to sample $Z_i$, we use \cref{alg:cluster-popping} to draw independent samples of root-connected subgraphs.
\cref{thm:fpaus} implies that each sample takes at most $\frac{p_{\maxx}}{1-p_{\maxx}}m^2n$ time in expectation.
We need $O\left(\frac{n}{\eps^2(1-p_{\maxx})^2}\right)$ samples for each $Z_i$.
Putting everything together, we obtain the following theorem.

\begin{theorem}  \label{thm:bireach}
  There is an FPRAS for \bireach.
  The expected running time is $O\left(\eps^{-2}(1-p_{\maxx})^{-3}m^2n^3\right)$ for an $(1\pm\eps)$-approximation.
\end{theorem}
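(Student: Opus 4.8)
The plan is to assemble the approximate counting scheme described above into an FPRAS whose running time is governed, stage by stage, by the sampler \cref{alg:cluster-popping} through \cref{thm:fpaus}. First I would dispose of the degenerate case: check in linear time whether $G$ itself is root-connected; if it is not, then $\Zreach(G,r;\pvector)=0$, which we output exactly. So from now on assume $G$ is root-connected.

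Next I would check that the whole contraction sequence $G_0,\dots,G_{n-1}$ stays inside the hypotheses of \cref{thm:fpaus}. Contracting all arcs between two adjacent vertices of a bi-directed graph (and deleting the arcs between them) again yields a bi-directed graph, since the anti-parallel twin of every surviving arc survives, with the same failure probability; and contraction never destroys root-connectivity. Hence each $G_i$ is a root-connected bi-directed graph with at most $m$ arcs, at most $n$ vertices, and maximum failure probability at most $p_{\maxx}$. By \cref{thm:fpaus}, one run of \cref{alg:cluster-popping} on any $G_i$ therefore terminates in expected time $O\big((1-p_{\maxx})^{-1}m^2n\big)$ (using $p_{\maxx}/(1-p_{\maxx})\le(1-p_{\maxx})^{-1}$), and by \cref{thm:correctness} its output is distributed as $\pi_{G_i}$. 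Drawing one sample of $R_i$ costs this plus an extra $O(m)$ to add the arcs between $u_i$ and $v_i$ and test root-connectivity in $G_{i-1}$, which is absorbed into the bound.

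With these ingredients in place, the estimator is exactly the one above: for each $i$ take $s=\ceil{5(1-p_{\maxx})^{-2}(n-1)\eps^{-2}}$ independent samples of $R_i$, let $\widetilde R_i$ be their empirical mean, and return $\widetilde Z=\prod_{i=1}^{n-1}\widetilde R_i$. The approximation guarantee $\Pr\big(\abs{Z-\widetilde Z}>\eps Z\big)\le 1/4$ is precisely the Proposition of Gorodezky and Pak stated above (\cite[Section 9]{GP14}), resting on the identities $\Ex R_i=\Zreach(G_{i-1},r;\pvector_{i-1})/\Zreach(G_i,r;\pvector_i)$ and $\prod_i\Ex R_i=\Zreach(G,r;\pvector)$ together with a second-moment bound that makes $s$ samples per stage suffice; correctness of the scheme is then immediate, and if a failure probability below $1/4$ is desired one takes the median of $O(\log\delta^{-1})$ independent copies. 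For the running time, the procedure draws exactly $(n-1)s=O\big((1-p_{\maxx})^{-2}n^2\eps^{-2}\big)$ samples in all, each of expected cost $O\big((1-p_{\maxx})^{-1}m^2n\big)$, so by linearity of expectation the total expected running time is $O\big(\eps^{-2}(1-p_{\maxx})^{-3}m^2n^3\big)$, as claimed.

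The only genuine subtlety — the step I would be most careful with — is that \cref{alg:cluster-popping} is a Las Vegas algorithm, so what \cref{thm:fpaus} yields is a bound on \emph{expected} running time rather than worst-case time. This is exactly what the statement asserts; but to obtain an FPRAS in the strict worst-case-time sense one runs the whole procedure and aborts it once it exceeds, say, four times its expected running time, which by Markov's inequality occurs with probability at most $1/4$ and degrades the overall success probability by only a constant, after which a constant number of repetitions with a majority/median vote restores success probability $\ge 3/4$ within the same asymptotic bound. Everything else is routine bookkeeping: confirming that the per-sample post-processing is dominated by the sampling cost, and that the factors $s$ and $n-1$ multiply out to the stated exponents.
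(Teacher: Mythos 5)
Your proposal is correct and follows essentially the same route as the paper: the telescoping contraction sequence $G_0,\dots,G_{n-1}$, the estimators $R_i$ with $s=\ceil{5(1-p_{\maxx})^{-2}(n-1)\eps^{-2}}$ samples each (relying on the Gorodezky--Pak concentration bound), and the per-sample cost from \cref{thm:fpaus}, multiplied out to $O\left(\eps^{-2}(1-p_{\maxx})^{-3}m^2n^3\right)$. The extra care you take over the degenerate non-root-connected case, the preservation of bi-directedness and root-connectivity under contraction, and the Las Vegas nature of the sampler is sound but does not change the argument.
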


A natural question is what if $1-p_{\maxx}$ is close to $0$.
We can answer this in the case of uniform failure probabilities, i.e., when $p_e=p$ for all $e\in A$.
Let $p^*=1-1/(3m)$.  If $p\leq p^*$ then run Algorithm~\ref{alg:cluster-popping} as usual, to produce a sample in time 
$O(m^3n)$.  Otherwise, run the algorithm with modified weights $p_e=p^*$ for all $e\in A$, and let the output be~$S$,
so that $(V,S)$ is a root-connected subgraph.  Suppose $n-1\leq k<m$.  
Note that any root-connected subgraph with $k$ edges can be augmented 
to one with $k+1$ edges in at most $m$ ways.  Also, any root-connected subgraph with $k+1$ edges can be  
obtained by augmentation from at least one with $k$~edges.  Thus,
$$
\Pr(|S|=k+1)\leq \frac{m(1-p^*)}{p^*}\Pr(|S|=k)< \tfrac12\Pr(|S|=k). 
$$  
It follows that $\Pr(|S|=n-1)\geq\frac12$, i.e., there is a significant probability of observing 
an arborescence or directed tree.  Of course, the output distribution is not quite the one we want, but we can deal 
issue with that by (usual) rejection sampling:  simply retain $S$ with probability 
$$\left[\frac{p^*(1-p)}{p(1-p^*)}\right]^{|S|-n+1},$$ 
and otherwise 
run Algorithm~\ref{alg:cluster-popping} again to produce a fresh sample.  Note that the rejection probability 
is at most~$\frac12$, so the expected overall running time is still $O(m^3n)$.  This deals with exact sampling.
Since the number of arborescences can be computed exactly in polynomial time, an FPRAS for the case
$p>p^*$ follows easily.  It is not clear whether this method can be adapted to varying failure probabilities~$\mathbf{p}$.

\section{Coupling between reliability and bi-directed reachability} \label{sec:equivalence}

In this section, we give an alternative proof of Ball's equivalence between \relia\ and \bireach\ \cite[Corollary 1]{Ball80}.
Our proof constructs a coupling, between the (edge-weighted) distribution of connected subgraphs in the undirected setting,
and the (edge-weighted) distribution of root-connected subgraphs in the bi-directed setting.
This coupling, together with \cref{alg:cluster-popping}, yields an efficient exact sampler for connected subgraphs. 

We use $\{u,v\}$ to denote an undirected edge, and $(u,v)$ or $(v,u)$ to denote a directed one (namely an arc).
Let $G=(V,E)$ be an undirected graph, and $\mathbf{p}=(p_e)_{e\in E}$ be a vector of failure probabilities.
Let $\overrightarrow{G}=(V,A)$ be the bi-directed graph obtained by replacing every edge in $G$ with a pair of anti-parallel arcs.
Namely, $A=\{(u,v),(v,u)\mid\{u,v\}\in E\}$.
Moreover, let $p_{(u,v)}=p_{(v,u)}=p_{\{u,v\}}$ and denote these failure probabilities by $\*p'$.
For $S\subseteq E$ (or $S\subseteq A$), let $\wt(S):=\prod_{e\in S}(1-p_e)\prod_{e\in E\setminus S}p_e$ (or $\wt(S):=\prod_{e\in S}(1-p_e)\prod_{e\in A\setminus S}p_e$).

Consider the following coupling between the product distribution over edges of $G$ and the one over arcs of $\overrightarrow{G}$.
We reveal edges in a breadth-first search (BFS) fashion in both graphs, from the same ``root'' vertex $r$.
If an edge $\{u,v\}$ is present in the subgraph of $G$, we couple it with the arc $(u,v)$ or $(v,u)$, 
whose direction is pointing towards $r$ in the subgraph of $\overrightarrow{G}$.
The arc in the other direction is drawn independently from everything else.
The key observation is that to decide the set of vertices that can reach $r$, at any point,
only one direction of a bi-directed edge is useful and the other is irrelevant.
One can verify that in the end, the subgraph of~$G$ is connected if and only if the subgraph of $\overrightarrow{G}$ is root-connected.
We will formalize this intuition next.

Fix an arbitrary ordering of $V$, which will be used for the exploration, and let the first vertex be a distinguished root $r$.
Let $\+P(S)$ denote the power set of $S$ for a set $S$.
Define a mapping $\Phi:\+P(E)\rightarrow \+P(A)$ as follows.
For $S\subseteq E$, we explore all vertices that can reach $r$ in $(V,S)$ in a deterministic order,
and add arcs to $\Phi(S)$ in the direction towards $r$.
To be more specific, we maintain the set of explored and the set of active vertices, denoted by $V_{e}$ and $V_{a}$, respectively.
At the beginning, $V_e=\emptyset$ and $V_a=\{r\}$.
Given $V_e$ and $V_a$, let $v$ be the first vertex (according to the predetermined ordering) in $V_a$.
For all $u\in V\setminus V_e$, if $\{u,v\}\in S$, add $(u,v)$ to $\Phi(S)$ and add $u$ to $V_a$ ($u$ may be in $V_a$ already).
Then move $v$ from $V_a$ to $V_e$.
This process ends when all vertices that can reach $r$ in $(V,S)$ are explored.
Let $\sigma_S$ be the arriving order of $V_e$.
We will call $\sigma_S$ the traversal order.
We remark that if $\{u,v\}\in S$ then exactly one of the arcs $(u,v)$ and $(v,u)$ is in $\Phi(S)$,
and otherwise neither arc is in~$\Phi(S)$.

Strictly speaking, the exploration above is not a BFS ($V_a$ may contain a newly added vertex that is lower in the predetermined ordering than all other older vertices). 
To perform a BFS we need to in addition maintain a layer ordering, which seems unnecessary.
The key properties of the exploration are: 1) all edges incident to the current vertex are processed together, as a group; 
2) $V_e$ is always connected (or root-connected for $\Psi$ below).

Similarly, define $\Psi:\+P(A)\rightarrow\+P(E)$ as follows.
For $S'\subseteq A$, we again maintain $V_{e}$ and $V_{a}$, 
and initialize $V_e=\emptyset$ and $V_a=\{r\}$.
Given $V_e$ and $V_a$, let $v$ be the first vertex in $V_a$.
For all $u\in V\setminus V_e$, if $(u,v)\in S'$, add $\{u,v\}$ to $\Psi(S')$ and add $u$ to $V_a$.
Then move $v$ from $V_a$ to $V_e$.
This process ends when all vertices that can reach $r$ in $(V,S')$ are explored.
Analogously, let $\sigma_{S'}$ be the arriving order of $V_e$.
We remark that if $(u,v)\not\in S'$, and $v$ is visited before $u$, then $\{u,v\}\not\in\Psi(S')$, even in case of $(v,u)\in S'$.

Let $\Omega:=\{S\subseteq E\mid (V,S)\text{ is connected}\}$, and correspondingly $\overrightarrow{\Omega}:=\{S\subseteq A\mid (V,S)\text{ is root-connected}\}$.
We have the following lemma.

\begin{lemma}\label{lem:mapping}
  Let $\Phi$, $\Psi$, $\Omega$, and $\overrightarrow{\Omega}$ be defined as above.
  Then the following holds:
  \begin{enumerate}
    \item \label{item:into} if $S\in\Omega$, then $\Phi(S)\in\overrightarrow{\Omega}$;
    \item \label{item:into2} if $S'\in\overrightarrow{\Omega}$, then $\Psi(S')\in\Omega$;
    \item \label{item:psi-phi} if $S\in\Omega$, then $\Psi(\Phi(S))=S$;
    \item \label{item:surjective} $\Psi(\overrightarrow{\Omega}) = \Omega$;
    \item \label{item:weight} for any $S\in\Omega$, 
      \begin{align*}
        \wt(S)=\sum_{S'\in\Psi^{-1}(S)}\wt(S').
      \end{align*}
  \end{enumerate}
\end{lemma}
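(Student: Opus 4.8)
The plan is to prove the five items essentially in the order stated, since each relies on properties of the two exploration processes defining $\Phi$ and $\Psi$. First I would isolate the single structural fact that makes everything work: both explorations process all edges/arcs incident to the current vertex as a group, and the set of explored vertices $V_e$ is always connected (for $\Phi$) or root-connected (for $\Psi$) as a subgraph of what has been added so far. Crucially, when $v$ is the current vertex in the $\Phi$-exploration, every edge $\{u,v\}\in S$ contributes the arc $(u,v)$ (pointing towards the already-explored side), so each explored vertex other than $r$ has an outgoing arc in $\Phi(S)$ leading to a vertex explored earlier; chaining these gives a directed path to $r$. For item~(\ref{item:into}), since $S\in\Omega$ means $(V,S)$ is connected, the $\Phi$-exploration explores all of $V$, so every non-root vertex acquires such an outgoing arc and hence reaches $r$ in $(V,\Phi(S))$. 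Item~(\ref{item:into2}) is symmetric: if $S'\in\overrightarrow{\Omega}$ then the $\Psi$-exploration reaches every vertex, and the undirected edges added along the way form a connected subgraph spanning $V$.

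Item~(\ref{item:psi-phi}) is the technical heart and the step I expect to be the main obstacle. The claim is $\Psi(\Phi(S))=S$ for $S\in\Omega$. I would prove by induction on the steps of the exploration that the $\Phi$-exploration on $S$ and the $\Psi$-exploration on $\Phi(S)$ proceed in lockstep: they maintain identical $V_e$ and $V_a$ at every stage, visit vertices in the same traversal order $\sigma_S=\sigma_{\Phi(S)}$, and at the step processing vertex $v$, the edge $\{u,v\}$ is added to $\Psi(\Phi(S))$ if and only if $(u,v)\in\Phi(S)$ with $u\notin V_e$, if and only if $\{u,v\}\in S$. The subtle point requiring care is the asymmetry flagged in the remark before the lemma: in the $\Psi$-exploration, when processing $v$ we only look at arcs $(u,v)$ with $u\notin V_e$; I must check that no edge of $S$ is ``missed'' because its corresponding arc points the wrong way relative to the traversal order. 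But $\Phi$ was constructed precisely so that for $\{u,v\}\in S$ the arc points from the later-visited endpoint to the earlier-visited one, so when the later endpoint is processed the arc is examined and the edge recovered; and since $(V,S)$ is connected, every edge has both endpoints explored, so nothing is lost. Conversely no spurious edges appear because $\Phi(S)$ contains no arc outside the pairs coming from $S$.

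For item~(\ref{item:surjective}), the inclusion $\Psi(\overrightarrow{\Omega})\subseteq\Omega$ is item~(\ref{item:into2}), and $\Omega\subseteq\Psi(\overrightarrow{\Omega})$ follows from item~(\ref{item:psi-phi}) together with item~(\ref{item:into}): given $S\in\Omega$, $\Phi(S)\in\overrightarrow{\Omega}$ and $\Psi(\Phi(S))=S$. Finally, item~(\ref{item:weight}) is the bookkeeping payoff. Fix $S\in\Omega$ and let $S'\in\Psi^{-1}(S)$; I would argue that running the $\Psi$-exploration on any such $S'$ produces the same traversal order and the same set of ``used'' arcs, namely exactly $\Phi(S)$ — because $\Psi(S')=S$ forces the explored edge set to be $S$, and by the lockstep argument the used arcs must be those pointing towards $r$ in traversal order, i.e.\ $\Phi(S)$. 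Hence $S'$ must contain every arc of $\Phi(S)$, while for each of the $m-|S|$ non-edges of $S$ the corresponding two arcs are absent from the ``used'' set, and for each of the $|S|$ edges of $S$ exactly one arc (the one in $\Phi(S)$) is forced present, leaving the anti-parallel ``reverse'' arc completely free. Thus $\Psi^{-1}(S)$ is in bijection with $\{0,1\}^{S}$ via the choice of each such reverse arc, and summing $\wt(S')$ over these choices factors as $\wt_{\overrightarrow G}(\Phi(S))\cdot\prod_{e\in S}\big((1-p_e)+p_e\big)=\wt_{\overrightarrow G}(\Phi(S))$; since flipping/ignoring a reverse arc of an edge in $S$ does not change the contribution pattern, and the arcs of $\Phi(S)$ carry weight $\prod_{e\in S}(1-p_e)\prod_{e\in E\setminus S}p_e^2$ against the $2m$ arcs — one checks this collapses to $\prod_{e\in S}(1-p_e)\prod_{e\in E\setminus S}p_e=\wt(S)$ after the geometric sum over reverse arcs of $E\setminus S$ as well. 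I would present this last computation as a short explicit sum rather than in prose, being careful to keep it on a single line inside the display.
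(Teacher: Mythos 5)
Your proposal follows the paper's proof almost exactly: items (\ref{item:into})--(\ref{item:surjective}) are argued the same way (the invariant that $V_e$ is always connected/root-connected, the lockstep induction showing $\sigma_S=\sigma_{\Phi(S)}$, and the observation that an unexamined edge is never recovered later because its earlier-explored endpoint has already left $V\setminus V_e$), and for item (\ref{item:weight}) you, like the paper, characterize $\Psi^{-1}(S)$ as the sets $S'$ that agree with the exploration of $\Phi(S)$ on the examined arcs and then sum out the free arcs.

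There is, however, one concrete error in your accounting for item (\ref{item:weight}). The fibre $\Psi^{-1}(S)$ is \emph{not} in bijection with $\{0,1\}^{S}$. For \emph{every} undirected edge $\{u,v\}\in E$ --- whether or not it lies in $S$ --- exactly one of its two arcs is examined during the $\Psi$-exploration, namely the arc pointing from the later-explored endpoint to the earlier-explored one; that arc is forced present if $\{u,v\}\in S$ and forced absent if $\{u,v\}\notin S$, while the anti-parallel arc is never examined and is completely free in both cases. So the fibre is in bijection with $\{0,1\}^{E}$ and has size $2^{m}$; your claim that for a non-edge ``the corresponding two arcs are absent'' constrains one arc too many, and the intermediate identity $\wt_{\overrightarrow G}(\Phi(S))\cdot\prod_{e\in S}\bigl((1-p_e)+p_e\bigr)=\wt(S)$ is false as written (the left side carries $p_e^2$ for every $e\in E\setminus S$). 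The paper avoids this by explicitly naming the examined-but-absent arcs, $\Phi_c(S)=\{(u,v)\mid (u,v)\notin\Phi(S)\text{ and }v<u\text{ in }\sigma_{\Phi(S)}\}$, and defining consistency as $\Phi(S)\subseteq S'$ \emph{and} $S'\cap\Phi_c(S)=\emptyset$; then $\prod_{e\in\Phi(S)}(1-p_e)\prod_{e\in\Phi_c(S)}p_e=\wt(S)$ on the nose and the $m$ free arcs each contribute a factor $(1-p_e)+p_e=1$. Your closing remark about also performing ``the geometric sum over reverse arcs of $E\setminus S$'' shows you sense the right answer, but it contradicts the bijection you state two lines earlier; you should restate the fibre as $\{0,1\}^{E}$ from the start. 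A secondary, more minor point: your assertion that $\Psi(S')=S$ ``forces'' the traversal order and the used arcs to coincide with those of $\Phi(S)$ should be justified by a first-discrepancy argument (as the paper does), since the traversal order of the $\Psi$-exploration on $S'$ is itself determined by $S'$.
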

\begin{proof}
  \begin{enumerate}
    \item It is easy to verify that, at any point of the construction of $\Phi$,
      all vertices in $V_e$ can reach $r$, in both $(V,S)$ and $(V,\Phi(S))$.
      If $S\in\Omega$, then $V_e=V$ at the end of $\Phi$.
      Hence $(V,\Phi(S))$ is root-connected, and $\Phi(S)\in\overrightarrow{\Omega}$.
    \item This item is completely analogous to item \eqref{item:into}.
    \item If $\{u,v\}\in S$ and $u$ is processed first during the exploration, then $(v,u)\in\Phi(S)$.
      The traversal orderings $\sigma_S$ and $\sigma_{\Phi(S)}$ are the same.
      Hence, during the construction of $\Psi(\Phi(S))$, $u$ is still processed first, and $\{v,u\}\in\Psi(\Phi(S))$.
      On the other hand, if $\{u,v\}\not \in S$, then neither $(u,v)$ nor $(v,u)$ is in $\Phi(S)$ and thus $\{u,v\}\not\in\Psi(\Phi(S))$.
    \item This item is a straightforward consequence of items \eqref{item:into}, \eqref{item:into2}, and \eqref{item:psi-phi}.
    \item By item \eqref{item:psi-phi}, we have that $\Phi(S)\in\Psi^{-1}(S)$.
      Let 
      \begin{align*}
        \Phi_c(S):=\left\{(u,v)\mid(u,v)\not\in\Phi(S)\text{ and $v<u$ in the traversal order $\sigma_{\Phi(S)}$}\right\}.
      \end{align*}
      Note that $\Phi(S)\cup\Phi_c(S)$ covers all unordered pairs of vertices as $S\in\Omega$.
      Moreover,
      \begin{align}
        \label{eqn:weights}
        \prod_{e\in \Phi(S)}(1-p_e)\prod_{e\in\Phi_c(S)}p_e & = \wt(S).        
      \end{align}
      Call $S'$ consistent with $\Phi(S)$ if $\Phi(S)\subseteq S'$ and $S'\cap\Phi_c(S)=\emptyset$.

      We claim that $S'\in\Psi^{-1}(S)$ if and only if $S'$ is consistent with $\Phi(S)$.
      Suppose $S'$ is not consistent with $\Phi(S)$.
      Consider the exploration of $\Phi(S)$ and $S'$ in the construction of $\Psi$ simultaneously.
      Since $S'$ is not consistent with $\Phi(S)$, either $\Phi(S)\setminus S'\neq\emptyset$ or $S'\cap\Phi_c(S)\neq\emptyset$.
      Let $v$ be the first vertex during the exploration so that there is an arc $(u,v)\in\Phi(S)\setminus S'$, or $(u,v)\in S'\cap\Phi_c(S)$ for some $u\not\in V_e$.
      Since $S\in\Omega$, all vertices will be processed, and such a $v$ must exist.
      (In the latter case, since $(u,v)\in\Phi_c(S)$, $v$ is active first.)
      If $(u,v)\in\Phi(S)\setminus S'$, then $\{u,v\}\not\in\Psi(S')$ but $\{u,v\}\in \Psi(\Phi(S))$.
      If $(u,v)\in S'\cap\Phi_c(S)$, $\{u,v\}\not\in\Psi(\Phi(S))$ but $\{u,v\}\in\Psi(S')$.
      In either case, $\Psi(S')\neq\Psi(\Phi(S))=S$ (by item \eqref{item:psi-phi}).

      On the other hand, if $\Phi(S)\subseteq S'$ and $S'\cap\Phi_c(S)=\emptyset$,
      then we can trace through the construction of $\Psi(\Phi(S))$ and $\Psi(S')$ to verify that $\Psi(S')=\Psi(\Phi(S))=S$.

      The claim together with \eqref{eqn:weights} implies that
      \begin{align*}
        \sum_{S'\in\Psi^{-1}(S)}\wt(S') & = \sum_{S'\text{ is consistent with } \Phi(S)} \wt(S') \\
        & = \prod_{e\in \Phi(S)}(1-p_e)\prod_{e\in\Phi_c(S)}p_e = \wt(S). 
      \end{align*}
  \end{enumerate}
\end{proof}

\begin{lemma}\label{lem:coupling}
  $\Zrel(G;\*p)=\Zreach(\overrightarrow{G},r;\*p')$.
\end{lemma}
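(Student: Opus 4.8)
The plan is to deduce \cref{lem:coupling} directly from \cref{lem:mapping} by summing the weight identity over the connected subgraphs of $G$. First I would recall the definitions of the two quantities: $\Zrel(G;\*p) = \sum_{S \in \Omega} \wt(S)$ and $\Zreach(\overrightarrow{G},r;\*p') = \sum_{S' \in \overrightarrow{\Omega}} \wt(S')$, where $\Omega$ is the set of connected subgraphs of $G$ and $\overrightarrow{\Omega}$ the set of root-connected subgraphs of $\overrightarrow{G}$. (One should note that the quantity $\Zreach$ defined in \cref{sec:prelim} uses the normalizing constant of $\pi_G$, which is exactly this sum over root-connected subgraphs, and similarly for $\Zrel$.)

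Next I would observe that by items \eqref{item:into2} and \eqref{item:surjective} of \cref{lem:mapping}, the map $\Psi$ restricts to a surjection $\overrightarrow{\Omega} \to \Omega$, so the fibers $\{\Psi^{-1}(S) : S \in \Omega\}$ form a partition of $\overrightarrow{\Omega}$. (Here $\Psi^{-1}(S)$ should be read as the set of $S' \in \overrightarrow{\Omega}$ with $\Psi(S') = S$; item \eqref{item:into2} guarantees that any $S'$ mapping into $\Omega$ under $\Psi$ already lies in $\overrightarrow{\Omega}$, and item \eqref{item:surjective} guarantees every fiber is nonempty, though nonemptiness is not even needed for the counting.) Therefore the sum over $\overrightarrow{\Omega}$ decomposes as a double sum over fibers:
\begin{align*}
  \Zreach(\overrightarrow{G},r;\*p') = \sum_{S' \in \overrightarrow{\Omega}} \wt(S') = \sum_{S \in \Omega} \sum_{S' \in \Psi^{-1}(S)} \wt(S').
\end{align*}

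Then I would apply item \eqref{item:weight} of \cref{lem:mapping}, which states exactly that $\sum_{S' \in \Psi^{-1}(S)} \wt(S') = \wt(S)$ for every $S \in \Omega$. Substituting this in gives
\begin{align*}
  \Zreach(\overrightarrow{G},r;\*p') = \sum_{S \in \Omega} \wt(S) = \Zrel(G;\*p),
\end{align*}
which is the claim. The argument is essentially bookkeeping once \cref{lem:mapping} is in hand, so I do not anticipate a genuine obstacle here; the only point requiring a little care is making sure the partition-of-$\overrightarrow{\Omega}$-into-fibers step is justified, i.e.\ that every root-connected $S'$ maps under $\Psi$ to a connected subgraph (item \eqref{item:into2}) and hence lands in some fiber, and that distinct $S$ give disjoint fibers (immediate since $\Psi$ is a function). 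With that settled, the weight identity does all the work.
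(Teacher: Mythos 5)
Your proposal is correct and follows essentially the same route as the paper's own proof: identify the two quantities as sums of weights over $\Omega$ and $\overrightarrow{\Omega}$, use items \eqref{item:into2} and \eqref{item:surjective} of \cref{lem:mapping} to see that the fibers $\left(\Psi^{-1}(S)\right)_{S\in\Omega}$ partition $\overrightarrow{\Omega}$, and then apply item \eqref{item:weight} to collapse the double sum. No gaps; your extra care about why every $S'\in\overrightarrow{\Omega}$ lands in some fiber is exactly the point the paper's proof also relies on.
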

\begin{proof}
  First notice that 
  \begin{align*}
    \Zrel(G;\*p) = \sum_{S\in\Omega}\wt(S)
  \end{align*}
  and 
  \begin{align*}
    \Zreach(\overrightarrow{G},r;\*p') = \sum_{S\in\overrightarrow{\Omega}}\wt(S).
  \end{align*}
  By item \eqref{item:surjective} of \cref{lem:mapping}, $\Psi(\overrightarrow{\Omega}) = \Omega$, 
  implying that $\left( \Psi^{-1}(S) \right)_{S\in\Omega}$ is a partition of $\overrightarrow{\Omega}$.
  Combining this with item \eqref{item:weight} of \cref{lem:mapping},
  \begin{align*}
    \sum_{S\in\Omega}\wt(S) & = \sum_{S\in\Omega}\sum_{S'\in\Psi^{-1}(S)}\wt(S')\\
    & = \sum_{S'\in\overrightarrow{\Omega}}\wt(S').
  \end{align*}
  The lemma follows.
\end{proof}

\cref{lem:coupling} is first shown by Ball \cite[Corollary 2]{Ball80} via modifying edges one by one.
Instead, our proof is essentially a coupling argument and has a new consequence 
that \cref{alg:cluster-popping} can be used to sample edge-weighted connected subgraphs.
Recall our notation $\pi_G(\cdot)$, and generalise it to undirected graphs.  Thus, for an undirected (or directed) graph~$G$,
$\pi_G(\cdot)$ is the distribution resulting from drawing each edge (or arc) $e$ independently with probability $1-p_e$, 
and conditioning on the graph drawn being connected (or root-connected).

\begin{lemma}\label{lem:sampling}
  If a random root-connected subgraph $S'$ is drawn from $\pi_{\overrightarrow{G}}(\cdot)$,
  then $\Psi(S')$ has distribution $\pi_{G}(\cdot)$.
\end{lemma}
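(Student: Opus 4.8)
The plan is to show that the pushforward of $\pi_{\overrightarrow{G}}$ under $\Psi$ is exactly $\pi_G$ by a direct computation of probabilities, leaning entirely on \cref{lem:mapping}. Recall that $\pi_{\overrightarrow{G}}(S')=\wt(S')/\Zreach(\overrightarrow{G},r;\*p')$ for $S'\in\overrightarrow{\Omega}$ and is $0$ otherwise. So for any fixed $S\subseteq E$, the probability that $\Psi(S')=S$ when $S'\sim\pi_{\overrightarrow{G}}$ is
\begin{align*}
  \Pr\bigl(\Psi(S')=S\bigr)=\frac{1}{\Zreach(\overrightarrow{G},r;\*p')}\sum_{S'\in\Psi^{-1}(S)\cap\overrightarrow{\Omega}}\wt(S').
\end{align*}

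First I would observe that we may drop the intersection with $\overrightarrow{\Omega}$: by item~\eqref{item:into2} of \cref{lem:mapping}, $\Psi(S')\in\Omega$ whenever $S'\in\overrightarrow{\Omega}$, so if $S\notin\Omega$ then the sum is empty and the probability is $0$ — which is consistent with $\pi_G$ being supported on $\Omega$. Conversely, for $S\in\Omega$, I claim $\Psi^{-1}(S)\subseteq\overrightarrow{\Omega}$ already, i.e.\ every arc-set mapping to a connected $S$ is automatically root-connected; this should follow from the construction of $\Psi$ (it explores exactly the vertices that can reach $r$, and the proof of item~\eqref{item:weight} shows $\Psi^{-1}(S)$ consists precisely of the sets consistent with $\Phi(S)$, each of which contains $\Phi(S)$, which is root-connected by item~\eqref{item:into}). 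Either way, for $S\in\Omega$ the sum over $\Psi^{-1}(S)\cap\overrightarrow{\Omega}$ equals the sum over all of $\Psi^{-1}(S)$, which by item~\eqref{item:weight} of \cref{lem:mapping} is exactly $\wt(S)$.

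Then I would combine this with \cref{lem:coupling}, which gives $\Zreach(\overrightarrow{G},r;\*p')=\Zrel(G;\*p)=\sum_{S\in\Omega}\wt(S)$, the normalizing constant of $\pi_G$. Hence for $S\in\Omega$,
\begin{align*}
  \Pr\bigl(\Psi(S')=S\bigr)=\frac{\wt(S)}{\Zreach(\overrightarrow{G},r;\*p')}=\frac{\wt(S)}{\Zrel(G;\*p)}=\pi_G(S),
\end{align*}
and $\Pr(\Psi(S')=S)=0=\pi_G(S)$ for $S\notin\Omega$. Since $\Psi(S')$ is always an element of $\Omega$ (when $S'$ is in the support of $\pi_{\overrightarrow{G}}$) and these probabilities sum to $1$, this identifies the law of $\Psi(S')$ as $\pi_G$.

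There is no serious obstacle here: every ingredient is already packaged in \cref{lem:mapping} and \cref{lem:coupling}. The only point requiring a line of care is confirming that summing $\wt$ over $\Psi^{-1}(S)\cap\overrightarrow{\Omega}$ is the same as summing over all of $\Psi^{-1}(S)$ — i.e.\ that no arc-set outside $\overrightarrow{\Omega}$ maps into $\Omega$ under $\Psi$, or equivalently that such sets contribute nothing; I would settle this by the observation that $\Psi$'s output depends only on the arcs reaching $r$, so I can restrict attention to root-connected $S'$ without loss, or alternatively just cite that $\Psi^{-1}(S)$ (as characterized in the proof of item~\eqref{item:weight}) lies entirely in $\overrightarrow{\Omega}$. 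Once that is in place the computation above is immediate.
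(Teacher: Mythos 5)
Your proposal is correct and follows essentially the same route as the paper's proof: push forward the law of $S'$ through $\Psi$, apply items \eqref{item:into2} and \eqref{item:weight} of \cref{lem:mapping} to collapse the sum over $\Psi^{-1}(S)$ to $\wt(S)$, and invoke \cref{lem:coupling} to match the normalizing constants. Your extra remark that $\Psi^{-1}(S)\subseteq\overrightarrow{\Omega}$ for $S\in\Omega$ is a valid (and slightly more careful) justification of a step the paper leaves implicit.
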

\begin{proof}
  Since $S'\in\overrightarrow{\Omega}$, by item \eqref{item:into2} of \Cref{lem:mapping}, $\Psi(S')\in\Omega$.
  Moreover, for any $s\in\Omega$,
  \begin{align*}
    \Pr[\Psi(S')=s] & = \sum_{s'\in\Psi^{-1}(s)}\Pr[S'=s'] \\
    & = \sum_{s'\in\Psi^{-1}(s)}\frac{\wt(s')}{\Zreach(\overrightarrow{G},r;\*p')}\\
    & = \frac{\wt(s)}{\Zrel(G;\*p)} = \pi_{G}(s),
  \end{align*}
  where we used item \eqref{item:weight} of \Cref{lem:mapping} and \Cref{lem:coupling} in the last line.
\end{proof}

There is also a coupling going the reversed direction of \cref{lem:sampling},
by drawing a random connected subgraph $S$ from $\pi_{G}(\cdot)$, 
mapping it to $\Phi(S)$, and excluding all arcs in $\Phi_c(S)$.
All other arcs are drawn independently.
The resulting $S'$ has distribution $\pi_{\overrightarrow{G}}(\cdot)$.
Its correctness is not hard to prove, given \cref{lem:mapping}, but it is not the direction of use to us and we omit its proof.

\Cref{thm:bireach} and \Cref{lem:coupling} imply the counting part of \Cref{thm:main}.
\Cref{thm:fpaus} and \Cref{lem:sampling} imply the sampling part of \Cref{thm:main}.

\section{Counting connected subgraphs of a specified cardinality}\label{sec:fixedsize}

In this section, we show that the sampling algorithm in \Cref{thm:main} also leads to an FPRAS for the number of connected subgraphs of any fixed size.

For a connected (undirected) graph $G=(V,E)$, as usual let $n=\abs{V}$ and $m=\abs{E}$.
For $n-1\le t\le m$, let $H_t\subset E$ be the set of connected (and spanning) subgraphs of size $t$, and $N_t=\abs{H_t}$.
Notice that $N_m=1$, and $N_{n-1}$ is the number of spanning trees, which can be computed in polynomial time exactly due to Kirchhoff's matrix-tree theorem.

The complements of connected subgraphs are independent sets of the co-graphic matroid associated with $G$,
and co-graphic matroids are representable \cite{Oxl92}.
Hence, by a breakthrough result of Huh and Katz \cite{HK12} (see also \cite{Lenz13} for a detailed derivation),
$(N_t)_t$ is a log-concave sequence.

\begin{proposition}\label{prop:log-concave}
  For any $n\le t\le m-1$,
  \begin{align*}
    N_{t-1}N_{t+1}\le N_t^2.
  \end{align*}
\end{proposition}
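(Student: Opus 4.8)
The statement to prove is \Cref{prop:log-concave}: for a connected graph $G=(V,E)$ with $|V|=n$, $|E|=m$, and $N_t$ the number of connected spanning subgraphs with $t$ edges, one has $N_{t-1}N_{t+1}\le N_t^2$ for $n\le t\le m-1$.

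The plan is to reduce the statement to the known log-concavity of the coefficients of the characteristic polynomial of a representable matroid, via the Huh--Katz theorem, as already hinted in the text just before the proposition. First I would recall that a subset $F\subseteq E$ is the edge set of a connected spanning subgraph of $G$ precisely when $E\setminus F$ is an independent set in the co-graphic matroid $M^*(G)$ (the dual of the graphic matroid), since deleting $E\setminus F$ from $G$ keeps $G$ connected exactly when $E\setminus F$ contains no cut, i.e.\ no circuit of $M^*(G)$. Writing $r^*$ for the rank of the co-graphic matroid, which equals $m-n+1$, and letting $I_k$ denote the number of independent sets of $M^*(G)$ of size $k$, the correspondence $F\mapsto E\setminus F$ gives $N_t = I_{m-t}$ for $0\le m-t\le r^*$, that is, for $n-1\le t\le m$. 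Thus the inequality $N_{t-1}N_{t+1}\le N_t^2$ translates into $I_{k-1}I_{k+1}\le I_k^2$ with $k=m-t$ ranging over $1\le k\le r^*-1$.

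Next I would invoke the fact that co-graphic matroids are representable over every field (indeed over $\mathbb{R}$), which is standard \cite{Oxl92}; hence $M^*(G)$ is a representable, in particular realizable, matroid. By the theorem of Huh and Katz \cite{HK12} (with the explicit derivation in \cite{Lenz13}), for a matroid representable over a field of characteristic zero the sequence $(I_k)_{k\ge 0}$ of numbers of independent sets of each size is log-concave with no internal zeros; equivalently the $f$-vector of the independence complex is log-concave. Concretely, $(I_0,I_1,\dots,I_{r^*})$ satisfies $I_{k-1}I_{k+1}\le I_k^2$ for all $1\le k\le r^*-1$. Substituting back $I_k=N_{m-k}$ and reindexing yields $N_{t-1}N_{t+1}\le N_t^2$ for $n\le t\le m-1$, which is exactly the claim. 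I would also note the boundary values: $N_m=I_0=1$ and $N_{n-1}=I_{r^*}$ is the number of bases of the co-graphic matroid (equivalently of the graphic matroid), i.e.\ the number of spanning trees, consistent with the matrix-tree remark, though these boundary identities are not needed for the inequality itself.

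The main obstacle, such as it is, is purely bookkeeping rather than mathematical depth: one must be careful with the indexing shift between ``number of edges in the connected subgraph'' and ``size of the complementary independent set in the co-graphic matroid,'' and with the exact range of $t$ for which the inequality is being asserted (the proposition restricts to $n\le t\le m-1$, which corresponds to $1\le k\le r^*-1$, safely inside the range where Huh--Katz applies). The genuinely hard input — log-concavity of the $f$-vector of the independence complex of a realizable matroid — is cited as a black box, so no part of its proof needs to be reproduced; one only needs to check that $M^*(G)$ meets its hypotheses, namely representability, which is the cited classical fact. Hence the proof is essentially: (i) set up the complement bijection; (ii) identify $N_t$ with $I_{m-t}$; (iii) cite representability of co-graphic matroids; (iv) apply Huh--Katz; (v) reindex.
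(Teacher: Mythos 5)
Your proposal is correct and follows exactly the route the paper takes: the paper likewise observes that complements of connected spanning subgraphs are the independent sets of the (representable) co-graphic matroid and then invokes Huh--Katz \cite{HK12} (via \cite{Lenz13}) for log-concavity of the independence numbers, treating the proposition as an immediate consequence. Your write-up merely makes the complementation bijection $N_t=I_{m-t}$ and the index ranges explicit, which is a faithful elaboration rather than a different argument.
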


We remark that log-concavity of such a sequence has now been established for all matroids \cite{AHK18},
but here we only need the case of representable matroids.

Once we have the log-concavity and the sampling algorithm in \Cref{thm:main},
we can apply a technique of Jerrum and Sinclair \cite[Section 5]{JS89} to efficiently approximate $N_t$ for any $n-1\le t\le m$.

\begin{theorem}\label{thm:fixed-size}
  For any $n-1\le t\le m$, there is an FPRAS for $N_t$.
\end{theorem}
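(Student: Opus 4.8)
The plan is to reduce the estimation of $N_t$ to the sampling primitive from \Cref{thm:main} via the self-reducibility-style telescoping technique of Jerrum and Sinclair. Introduce a one-parameter family of distributions on connected spanning subgraphs: for $\lambda>0$, let $\pi^{(\lambda)}$ assign to each connected $S\subseteq E$ probability proportional to $\lambda^{|S|}$, equivalently the distribution $\pi_{G}(\cdot)$ obtained by setting a uniform edge-presence ``weight'' $1-p=\lambda/(1+\lambda)$ (so $p=1/(1+\lambda)$); this is exactly the input to which our exact sampler applies, and by \Cref{thm:main} (or \Cref{thm:fpaus} together with \Cref{lem:sampling}) we can draw from $\pi^{(\lambda)}$ in expected time polynomial in $m$, $n$, and $(1-p)^{-1}=1+1/\lambda$. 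The partition function of this family is $Z(\lambda)=\sum_{k=n-1}^{m} N_k\lambda^{k}$, so $N_t$ is the coefficient we want to extract.

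First I would set up the telescoping product. Fix a target exponent $t$. The standard trick is to write $N_t$ as a telescoping product of ratios $N_{k+1}/N_k$ (or their reciprocals), each of which is estimated by a sample mean: if $S\sim\pi^{(\lambda)}$ then $\Pr(|S|=k+1)/\Pr(|S|=k)=\lambda\,N_{k+1}/N_k$, so a single well-chosen $\lambda$ near $N_k/N_{k+1}$ makes both events have non-negligible probability, and the empirical ratio of the two frequencies estimates $N_{k+1}/N_k$ with small relative variance. Chain these from $k=n-1$ (where $N_{n-1}$ is known exactly by the matrix-tree theorem) up to $k=t$. The number of stages is at most $m-n+1=O(m)$, and at each stage $O(m^2/\eps^2)$ samples suffice to control the aggregate relative error, by a standard variance-and-union-bound argument; multiplying by the per-sample cost from \Cref{thm:main} gives an overall polynomial running time and hence an FPRAS.

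The key obstacle — and the reason \Cref{prop:log-concave} is invoked — is ensuring that at each stage there is a choice of $\lambda$ making the two adjacent levels $k$ and $k+1$ simultaneously heavy under $\pi^{(\lambda)}$, and more importantly that consecutive good $\lambda$'s do not drift so fast that $(1-p)^{-1}$ becomes super-polynomial. Log-concavity of $(N_k)$ is precisely what tames the ratios $r_k:=N_k/N_{k+1}$: it forces $r_{n-1}\le r_n\le\cdots\le r_{m-1}$ to be monotone, so all the relevant $\lambda\approx r_k$ lie in a controlled range, and with $\lambda=r_k$ one shows using log-concavity that $\Pr(|S|=k)$ and $\Pr(|S|=k+1)$ are each $\Omega(1/m)$ (the distribution over levels, being log-concave in $k$, is essentially unimodal with bounded ``spread''). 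This is exactly the situation handled in \cite[Section 5]{JS89}, and I would cite that analysis, adapting the bookkeeping to the present weights. I therefore expect the only real work to be verifying that the edge-presence probabilities $1-p=\lambda/(1+\lambda)$ arising from the $\lambda$'s in play stay bounded away from $0$ and $1$ by an inverse polynomial — so that the sampler of \Cref{thm:main} remains efficient — which again follows from the monotonicity of the $r_k$ between the explicit endpoints $r_{n-1}=N_{n-1}/N_n$ and $r_{m-1}=N_{m-1}$, both of which are polynomially bounded in $m$.
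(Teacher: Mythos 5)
Your proposal is essentially the paper's own argument: tune the uniform failure probability so that $\lambda$ (the paper's $r=(1-p)/p$) sits near the ratio $N_{k}/N_{k+1}$, use the log-concavity of $(N_k)$ from \Cref{prop:log-concave} to get monotone ratios confined to $[1/m,m]$ and an $\Omega(1/\mathrm{poly}(m))$ mass on the two adjacent levels, and telescope via the Jerrum--Sinclair technique, bootstrapping each stage's parameter from the previous estimate. The only (immaterial) difference is direction: you chain upward from $N_{n-1}$ (matrix-tree theorem), whereas the paper chains downward from $N_m=1$ and $N_{m-1}$ (the number of non-bridge edges).
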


Here we sketch the outline of the algorithm.
We will only consider a uniform failure probability $p$ over all edges in the following.   
Also, we make no attempt to optimise the exponent in the polynomial running time.
The basic idea is to tune $p$ in the sampler of \Cref{thm:main} so that connected subgraphs of the desired size show up frequently enough.
First notice that \Cref{prop:log-concave} implies that the ratios $\frac{N_{t-1}}{N_t}$ is monotonically increasing.
It is straightforward to see that
\begin{align*}
  \frac{N_{n-1}}{N_{n}} \ge \frac{1}{m}, \quad \quad \text{ and } \quad\quad \frac{N_{m-1}}{N_m}\le m.
\end{align*}
Let $r_t=\frac{N_{t-1}}{N_t}$.
Hence,
\begin{align}\label{eqn:ratio-size-t}
  \frac{1}{m}\le r_n\le r_{n+1} \le \dots \le r_m\le{m}.
\end{align}

We will use $r=\frac{1-p}{p}$ to denote the edge weight when the failure probability of an edge is $p$.
With a little abuse of notation, let $\pi_{r}(\cdot)$ be the distribution over connected subgraphs when each edge is removed with probability $p=\frac{1}{1+r}$ independently.
(So $\pi_{r}(\cdot)$ is a product distribution on the edges, conditioned on the result being connected.) 
It is easy to see that for a connected subgraph $R\subset E$, $\pi_r(R)\propto r^{\abs{R}}$.
We note that $\pi_{r_t}(H_{t-1}) = \pi_{r_t}(H_{t})$,
and for any $i < t$,
\begin{align*}
  \frac{\pi_{r_t}(H_t)}{\pi_{r_t}(H_{i})}= r_t^{t-i}\cdot\frac{N_t}{N_i} 
  = r_t^{t-i}\cdot\prod_{j=i+1}^{t}\frac{N_{j}}{N_{j-1}}
  = r_t^{t-i}\cdot\prod_{j=i+1}^{t}r_{j}^{-1}
  \ge r_t^{t-i}r_t^{i-t} =1,
\end{align*}
where we used \eqref{eqn:ratio-size-t}.
Similarly, for any $i>t$, $\pi_{r_t}(H_t)\ge\pi_{r_t}(H_i)$.
Note that $\sum_{i=n-1}^m\pi_{r_t}(H_i)=1$. We conclude that
\begin{align}  \label{eqn:size-t-lb}
  \pi_{r_t}(H_{t-1})=\pi_{r_t}(H_t)\ge\frac{1}{m}.
\end{align}
Thus, if we run the sampling algorithm of \Cref{thm:main} with $p_t=\frac{1}{1+r_t}$, 
there is a significant probability to see subgraphs in $H_t$.

To utilise the argument above, we need to know $r_t$.
This can be done inductively, since
\begin{align*}
  \pi_{r_t}(H_{t-2})=\frac{r_{t-1}}{r_t}\cdot\pi_{r_t}(H_{t-1})\ge\frac{1}{m^3},
\end{align*}
where we used \eqref{eqn:ratio-size-t} and \eqref{eqn:size-t-lb}.
Rewrite $N_t$ as
\begin{align*}
  N_t=N_{m}\cdot\prod_{i=m}^{t+1}\frac{N_{i-1}}{N_i} = \prod_{i=m}^{t+1} r_i,
\end{align*}
and our estimator of $N_t$ will be the product of estimators for $r_i$ where $i\in[t+1,m]$.
A complete description is given in \Cref{alg:fix-size-counting}.
We should set $T$ to be a sufficiently large number (but still polynomial in $n$) so that the variances of the estimators are small enough.
Notice that $N_{m-1}$ is easy to compute since it is just the number of edges in $G$ that are not bridges.

\begin{algorithm}
  \caption{Approximately count connected subgraphs of a fixed size $t\in[n, m-2]$}
  \label{alg:fix-size-counting}
  \begin{algorithmic}
    \STATE {Let $\widetilde{r}\gets\frac{N_{m-1}}{N_m}$ and $\widetilde{N}=N_{m-1}$.}
      \FOR{$i=m-2,m-1,\dots,t$}
        \IF {$\widetilde{r}\not\in[1/2m,2m]$}
        \RETURN {0} \COMMENT {Note the bounds in \eqref{eqn:ratio-size-t}}
        \ENDIF
        \STATE {Draw $T$ samples from $\pi_{\widetilde{r}}(\cdot)$ using \Cref{alg:cluster-popping}, yielding a set $Y$.}
        \IF {$\abs{Y\cap H_i}=0$ or $\abs{Y\cap H_{i+1}}=0$}
          \RETURN {0}
        \ENDIF        
        \STATE {Let $\widetilde{r}\gets \widetilde{r}\cdot\frac{\abs{Y\cap H_i}}{\abs{Y\cap H_{i+1}}}$ and $\widetilde{N}\gets\widetilde{N}/\widetilde{r}$.}
      \ENDFOR
      \RETURN {$\widetilde{N}$}
  \end{algorithmic}
\end{algorithm}

The analysis of \Cref{alg:fix-size-counting} is identical to the proof of \cite[Theorem 5.3]{JS89} and thus omitted.
\Cref{thm:fixed-size} is a direct consequence of \Cref{alg:fix-size-counting}.

\section{Concluding remarks}\label{sec:conclusion}

In this paper we give an FPRAS for \relia\ (or, equivalently, \bireach), by confirming a conjecture of Gorodezky and Pak \cite{GP14}.
We also give an exact sampler for edge-weighted connected subgraphs with polynomial running time in expectation.
The core ingredient of our algorithms is the cluster-popping algorithm to sample root-connected subgraphs, namely \cref{alg:cluster-popping}.
We manage to analyze it using the \prs\ framework.

\relia\ is equivalent to counting weighted connected subgraphs,
which is the evaluation of the Tutte polynomial $T_{G}(x,y)$ for points $x=1$ and $y>1$.
An interesting question is about the dual of this half-line, namely for points $x>1$ and $y=1$,
whose evaluation is to count weighted acyclic subgraphs.
It is well known that for a planar graph $G$, $T_{G}(x,1)=T_{G^{\ast}}(1,x)$ where $G^{\ast}$ is the planar dual of $G$ \cite{Oxl92}.
Hence, \cref{thm:main} implies that in planar graphs, 
$T_{G}(x,1)$ can be efficiently approximated for $x>1$.
Can we remove the restriction of planar graphs?

Another interesting direction is to generalize \Cref{alg:cluster-popping} beyond bi-directed graphs.
What about Eulerian graphs?
Is approximating \reach\ \NP-hard in general?

\section*{Acknowledgements}

We thank Mark Huber for bringing reference \cite{GP14} to our attention,
Mark Walters for the coupling idea leading to \cref{lem:coupling},
David Harris for proposing the question answered in \Cref{sec:fixedsize},
and Igor Pak for comments on an earlier version.
We also thank the organizers of the ``LMS -- EPSRC Durham Symposium on Markov Processes, Mixing Times and Cutoff'',
where part of the work is carried out.

\bibliographystyle{alpha}
\bibliography{PRS}

\end{document}